\documentclass[10pt,twocolumn]{IEEEtran}%

\usepackage{amsmath}
\usepackage{amsfonts}
\usepackage{amssymb}
\usepackage{graphicx}
\usepackage{psfrag}
\usepackage{cite}
\usepackage{algorithm}
\usepackage{color}
\usepackage{float}
\usepackage{epsfig,array,multicol,verbatim,algorithmic}%
\setcounter{MaxMatrixCols}{30}
\IEEEoverridecommandlockouts
\newtheorem{theorem}{\bf Theorem}
\newtheorem{lemma}{\bf Lemma}
\newtheorem{property}{\bf Property}

\newtheorem{definition}{\bf Definition}
\newtheorem{convention}{\bf Convention}
\newtheorem{remark}{\bf Remark}
\setcounter{secnumdepth}{4}

\newlength{\aligntop}
\setlength{\aligntop}{-0.5em}
\newlength{\alignbot}
\setlength{\alignbot}{-0.8\baselineskip}
\addtolength{\alignbot}{-0.1em}
\makeatletter
\renewenvironment{align}{%
  \vspace{\aligntop}
  \start@align\@ne\st@rredfalse\m@ne
}{%
  \math@cr \black@\totwidth@
  \egroup
  \ifingather@
    \restorealignstate@
    \egroup
    \nonumber
    \ifnum0=`{\fi\iffalse}\fi
  \else
    $$%
  \fi
  \ignorespacesafterend%
  \vspace{\alignbot}\par\noindent
}
\newcommand{\mysection}[1]{\vspace*{-0.47em}\section{#1}\vspace*{-0.49em}}
\newcommand{\mysubsection}[1]{\vspace*{-1.05em}\subsection{#1}\vspace*{-0.2em}}

\begin{document}

\title{Coalition Formation Games for Collaborative Spectrum Sensing}
\author{Walid Saad, Zhu Han, Tamer Ba\c{s}ar, M\'{e}rouane Debbah, and Are Hj{\o}rungnes \vspace{-1.1cm} \thanks{Copyright (c) 2010 IEEE. Personal use of this material is permitted. However, permission to use this material for any other purposes must be obtained from the IEEE by sending a request to pubs-permissions@ieee.org.

W.~Saad and A. Hj{\o}rungnes are
with the UNIK Graduate University Center, University of Oslo, Oslo, Norway, e-mails: \texttt{\{saad,arehj\}@unik.no}.
Z.~Han is with Electrical and Computer Engineering Department, University of Houston, Houston, Tx, USA, email: \texttt{zhan2@mail.uh.edu}. T. Ba\c{s}ar is with the Coordinated Science Laboratory, University of Illinois at Urbana Champaign, IL, USA, email: \texttt{basar1@illinois.edu}. M.~Debbah is the Alcatel-Lucent chair, SUPELEC, Paris, France e-mail:
\texttt{merouane.debbah@supelec.fr}. This research is supported by the Research Council of Norway
 through the projects 183311/S10, 176773/S10 and 18778/V11, and through a MURI Grant at UIUC. A preliminary version of this work [50] appeared in the Proceedings of the IEEE Conference on Computer Communications (INFOCOM), April, 2009. }}
\maketitle

\begin{abstract}
Collaborative Spectrum Sensing (CSS) between secondary users (SUs) in cognitive networks exhibits an inherent tradeoff between minimizing the probability of missing the detection of the primary user (PU) and maintaining a reasonable false alarm probability (e.g., for maintaining a good spectrum utilization). In this paper, we study the impact of this tradeoff on the network structure and the cooperative incentives of the SUs that seek to cooperate for improving their detection performance. We model the CSS problem as a non-transferable coalitional game, and we propose distributed algorithms for coalition formation. First, we construct a  distributed coalition formation~(CF) algorithm that allows the SUs to self-organize into disjoint coalitions while accounting for the CSS tradeoff. Then, the CF algorithm is complemented with a coalitional voting game for enabling distributed coalition formation with detection probability guarantees~(CF-PD) when required by the PU. The CF-PD algorithm allows the SUs to form minimal winning coalitions (MWCs), i.e., coalitions that achieve the target detection probability with minimal costs. For both algorithms, we study and prove various properties pertaining to network structure, adaptation to mobility and stability. Simulation results show that CF reduces the average probability of miss per SU up to $88.45\%$ relative to the non-cooperative case, while maintaining a desired false alarm. For CF-PD, the results show that up to $87.25\%$ of the SUs achieve the required detection probability through MWCs.





\end{abstract}
\vspace{-0.1cm}
\mysection{Introduction}
 Recently, there has been a surge in wireless services, yielding a huge demand on the radio spectrum. However, the spectrum resources are scarce and most of them have been already licensed to existing operators. Various studies have shown that the actual licensed spectrum remains unoccupied for large periods \cite{FCC}. For efficiently exploiting these spectrum holes, \emph{cognitive radio} has been proposed \cite{CR01}. By monitoring and adapting to the environment, cognitive radios (secondary users~(SUs)) can share the spectrum with the licensed user (primary user~(PU)), operating whenever the PU is not using the spectrum. However, deploying and implementing such flexible SUs faces several challenges \cite{CR02} at different levels such as spectrum access \cite{CR02,NEW02,NEW08} and spectrum sensing \cite{DT00,CS00,NEW04,CS01,CS02, SA00,CS04,CS03,CS05,CS06}. Spectrum sensing mainly deals with the stage during which the cognitive users attempt to learn their environment prior to accessing and sharing the spectrum. For instance, the SUs must constantly sense the spectrum in order to detect the presence of the PU and use the spectrum holes without causing harmful interference to the PU. Hence, efficient and distributed spectrum sensing is of utter importance and constitutes a major challenge in cognitive radio networks.

For sensing the PU presence, the SUs must be able to detect its signal using various kinds of detectors such as energy detectors or others \cite{DT00}. However, the performance of spectrum sensing is significantly affected by the degradation of the PU signal due to path loss or shadowing (hidden terminal). It has been shown that, through collaborative spectrum sensing~(CSS) among SUs, the effects of this hidden terminal problem can be reduced and the probability of detecting the PU can be improved \cite{CS00,NEW04,CS01,CS04,CS03,CS02,CS05,CS06}. In \cite{CS00}, the SUs collaborate by sharing their sensing decisions through a centralized fusion center which combines the SUs' sensing bits using the OR-rule for data fusion. A similar approach is used in \cite{CS01} using different decision-combining methods. In \cite{CS02}, it is shown that, in CSS, soft decisions can have an almost comparable performance with hard decisions while reducing complexity. The authors in \cite{CS04} propose an evolutionary game model for CSS in order to inspect the strategies of the SUs and their contribution to the sensing process. The effect of the sensing time on the access performance of the SUs in a cognitive network is analyzed in \cite{SA00}. For improving the performance of CSS, spatial diversity techniques are presented in \cite{CS03} as a means for combatting the error probability due to fading on the reporting channel between the SUs and the central fusion center. Other interesting performance aspects of spectrum sensing and CSS are studied in \cite{NEW04,CS07,YY01,YY02,CS05,CS06}. Existing literature mainly focused on the performance assessment of CSS in the presence of a centralized fusion center that combines \emph{all} the SUs bits. In practice, the SUs can be at different locations in the network, and, thus, prefer to form nearby groups for CSS without relying on a centralized entity. Moreover, the SUs can belong to different service providers and need to interact with each other for CSS, instead of relaying their bits to a centralized fusion center which may not even exist in an \emph{ad hoc} network of SUs. In addition, a centralized approach requires the deployment and/or re-use of an infrastructure for the purpose of spectrum sensing, which is quite impractical as many secondary networks are inherently ad hoc. Further, the use of a centralized spectrum sensing approach can yield a significant overhead and complexity, notably in networks with a large number of SUs having varying locations.

The main contribution of this paper is to devise distributed strategies for CSS between the SUs in a cognitive network which capture the spectrum sensing objectives of these SUs. A further contribution of this paper is to inspect how the network structure can evolve and change as a result of the cooperative incentives of the SUs seeking to improve their spectrum sensing performance, given the inherent tradeoff that exists between the CSS gains in terms of reducing the probability of missing the detection of the PU and the cooperation costs in terms of maintaining a reasonable false alarm probability (e.g., to maintain an efficient spectrum utilization). We model the CSS problem as a non-transferable coalitional game and we propose a distributed coalition formation~(CF) algorithm. Using CF, each SU autonomously decides to form or leave a coalition while maximizing its utility in terms of detection probability and accounting for the false alarm cost. We show that, due to the cost for cooperation, independent disjoint coalitions will form in the network and a maximum coalition size exists for the proposed utility model. For cognitive networks where the PU imposes a detection probability level on the SUs, we propose a distributed coalition formation algorithm with detection probability guarantees (CF-PD) which is built by overlaying a coalitional voting game on the CF algorithm. The CF-PD algorithm enables the SUs to autonomously take the decision to form minimal winning coalitions (MWCs), i.e., coalitions that achieve the target detection probability with minimal costs. Through simulations, we assess the performance of both algorithms relative to the non-cooperative case, we compare the results with a centralized solution, and we show how the SUs can self-organize and adapt the network structure to environmental changes such as mobility.

Note that, although some existing literature such as \cite{CZ00} and \cite{HK00} have dealt with the idea of forming clusters for CSS, the objectives of \cite{CZ00} and \cite{HK00} are different from the contributions of our paper. For instance, \cite{CZ00} studies, in the context of the centralized fusion center based CSS, the possibility of using clusters for combatting the \emph{errors on the reporting channel}. However, the authors do not deal with the idea of distributed CSS and, as clearly stated in \cite[Section~III]{CZ00}, the paper does \emph{not propose any algorithm for forming the clusters}, instead it assumes that the clusters are pre-formed. Moreover, \cite{CZ00} does not account for the tradeoff between the probabilities of miss and false alarm, which impacts the size and identities of the SUs that will form coalitions or clusters. In this context, our paper complements \cite{CZ00} by having a number of different and novel contributions as discussed in the previous paragraph.

Further, in \cite{HK00}, the authors focus on studying the performance of CSS in a geographical area, using different detection techniques such as energy detection or feature detection. The idea of clustering in \cite{HK00} is restricted to dividing, based on the received signal strength of the PU, a large geographical area into a number of defined clusters which are geographical areas with a large radius, i.e., analogous to the cells of a wireless cellular system. Within every cluster or cell, the work in \cite{HK00} considers that a large number of SUs exist and these SUs utilize the traditional network-wide fusion center solution at cell-level.  Moreover, in \cite{HK00}, no target probability of miss or false alarm constraints are taken into account. In short, \cite{HK00} does not look at distributed strategies enabling the SUs to, autonomously, form coalitions while optimizing their probability of miss given a false alarm constraint. In essence, our approach complements the work done in \cite{HK00}. For instance, given a cognitive network deployed in a large geographical area, one can first cluster it geographically into cells using the approach \cite{HK00} and, then, our proposed coalition formation approach can be applied by the SUs inside each cluster or cell.

The rest of this paper is organized as follows: Section~\ref{sec:prob} presents
the system model. In Section \ref{sec:coal}, we present the proposed coalitional game and prove different properties while in Section~\ref{sec:coalform} we devise a distributed algorithm for autonomous coalition formation. Section~\ref{sec:dcfpd} extends the coalition formation algorithm to accommodate probability of detection guarantees. In Section~\ref{sec:compl}, we discuss implementation aspects in practical cognitive networks. Simulation results are presented and analyzed in Section \ref{sec:sim}. Finally, conclusions are drawn in
Section \ref{sec:conc}.

\mysection{System Model}\label{sec:prob}
Consider a network consisting of $N$ transmit-receive pairs of SUs and a single PU with $\mathcal{N} = \{1,\ldots,N\}$ denoting the set of all SUs. The SUs and the PU can be either stationary or mobile. Throughout this paper, we assume that all SUs are honest nodes with no malicious or cheating behavior. Since the focus is on spectrum sensing, we are only interested in the transmitter part of each of the $N$ SUs. In a non-cooperative approach, each of the $N$ SUs continuously senses the spectrum in order to detect the presence of the PU. For detecting the PU, we use energy detectors which are one of the main practical signal detectors in cognitive radio networks \cite{CS00,CS01,CS03}. In such a non-cooperative setting with Rayleigh fading, the probabilities of miss (i.e. probability of missing the detection of the PU) and false alarm for an SU $i$ are,  respectively, given by $P_{m,i}$ and $P_{f,i}$ \cite[Eqs. (2), (4)]{CS00}\begin{align}\label{eq:missprob}
P_{m,i} =1- e^{-\frac{\lambda}{2}} \sum_{n=0}^{m-2}\frac{1}{n!}\left(\frac{\lambda}{2}\right)^n + \left(\frac{1+\bar{\gamma}_{i,PU}}{\bar{\gamma}_{i,PU}}\right)^{m-1}
\nonumber\\
\times \left[e^{-\frac{\lambda}{2\left(1+\bar{\gamma}_{i,PU}\right)}} - e^{-\frac{\lambda}{2}} \sum_{n=0}^{m-2}\frac{1}{n!}\left(\frac{\lambda\bar{\gamma}_{i,PU}}{2(1+\bar{\gamma}_{i,PU})}\right)^n \right],
\end{align}
\begin{equation}\label{eq:falseind}
P_{f,i} = P_f= \frac{\Gamma(m,\frac{\lambda}{2})}{\Gamma(m)},
\end{equation}
where $m$ is the time bandwidth product, $\lambda$ is the energy detection threshold assumed to be the same for all SUs without loss of generality as in \cite{CS00,CS01,CS03}, $\Gamma(\cdot,\cdot)$ is the incomplete gamma function and $\Gamma(\cdot)$ is the gamma function. Moreover, $\bar{\gamma}_i$ represents the average SNR of the received signal from the PU to SU given by $\bar{\gamma}_{i,\textrm{PU}} = \frac{P_{\textrm{PU}}h_{\textrm{PU},i}}{\sigma^2}$ with $P_{\textrm{PU}}$ the transmit power of the PU,  $\sigma^2$ the Gaussian noise variance, and $h_{\textrm{PU},i}=\kappa/d_{\textrm{PU},i}^{\mu}$ the path loss between the PU and SU $i$, $\kappa$ is the path loss constant, $\mu$ the path loss exponent, and $d_{\textrm{PU},i}$ is the distance between the PU and SU $i$.

The expression in (\ref{eq:missprob}) averages out the effect of the small-scale fading, and, thus, in the remainder of this paper, we only deal with large-scale fading effects, e.g., due to path loss. Further, it is also important to note that the non-cooperative false alarm probability expression depends \emph{solely} on the detection threshold $\lambda$ and does not depend on the SU's location; hence, we dropped the subscript $i$ in (\ref{eq:falseind}). Note that we do not account for the shadowing effect, however, in future work, (\ref{eq:missprob}) can be modified to take into account this effect as shown in \cite{CR02,CS00,CS01}.  Moreover, an important metric is the probability of detection of the PU by a SU $i$, which is simply defined as $P_{d,i} = 1 - P_{m,i}$.

In order to minimize the interference on the PU and reduce the probability of miss, the SUs can cooperate by forming coalitions. Within each coalition $S \subseteq \mathcal{N}$, an SU designated as \emph{coalition head}, collects the sensing bits from the coalition's SUs and acts as a fusion center in order to make a coalition-based decision on the presence or absence of the PU. This can be seen as having the centralized CSS of \cite{CS00}, \cite{CS03} applied at the level of each coalition with the coalition head being the fusion center to which all the coalition members report. For combining the sensing bits and making the final detection decision, the coalition head uses the decision fusion OR-rule such as in \cite{CS00}, \cite{CS03}. For a coalition $S$ having coalition head $k\in S$, using CSS, the miss and false alarm probabilities are, respectively, given by \cite{CS03}
\begin{align}\label{eq:missclu}
Q_{m,S} = \displaystyle \prod_{i \in S} \left[ P_{m,i}(1-P_{e,i,k}) + (1- P_{m,i})P_{e,i,k}\right],
\end{align}
\begin{equation}\label{eq:falseclu}
Q_{f,S} =  1 - \displaystyle\prod_{i \in S} \left[ (1-P_{f})(1-P_{e,i,k}) + P_{f}P_{e,i,k}\right],
\end{equation}
\noindent where $P_{m,i}$, $P_{f}$ are given by (\ref{eq:missprob}), (\ref{eq:falseind}), and $P_{e,i,k}$ is the probability of error due to the fading on the reporting channel between the SU $i$ of coalition $S$ and the coalition head $k$. The error over the reporting channel is an important metric that affects the performance of CSS in terms of probability of miss as well as false alarm as demonstrated in \cite{CS03}. The main idea is that, whenever an SU needs to report its sensing bit to the coalition head, the coalition head might receive this bit in error due to the fading over the wireless channel between the SU and the coalition head. This reporting error $P_{e,i,k}$, between an SU $i$ and the coalition head $k$, increases the probability of miss and increases the false alarm as shown in in (\ref{eq:missclu}) and (\ref{eq:falseclu}). Inside any coalition $S$, assuming BPSK modulation in Rayleigh fading environments, the average probability of reporting error between SU $i \in S$ and the coalition head $k \in S$ is given by \cite{PROAKIS}
 \begin{equation}\label{eq:err}
P_{e,i,k} = \frac{1}{2} \left(1-\sqrt{\frac{\bar{\gamma}_{i,k}}{1+ \bar{\gamma}_{i,k}}}\right),
\end{equation}
where $\bar{\gamma}_{i,k}=\frac{P_i h_{i,k}}{\sigma^2}$ is the average SNR for bit reporting between SU $i$ and the coalition head $k$ inside $S$ with $P_i$ the transmit power of SU $i$ used for reporting the sensing bit to $k$ and $h_{i,k}=\frac{\kappa}{d_{i,k}^{\mu}}$ the path loss between SU $i$ and coalition head $k$. Any SU can be chosen as a coalition head within a coalition. However, for the remainder of this paper, we adopt the following convention without loss of generality.
\begin{convention}
Within a coalition $S$, the SU $k \in S$ having the lowest non-cooperative probability of miss $P_{m,k}$ is chosen as \emph{coalition head}. Hence, the coalition head $k$ of a coalition $S$ is given by $k \in \underset{i\in S}{\operatorname{arg\,min}}\,P_{m,i}$ with $P_{m,i}$ given by (\ref{eq:missprob})\footnote{Note that, if more than one SU in a coalition achieves the minimum probability of miss, then, the coalition head is picked at random among the set of SUs with minimum probability of miss.}.
\end{convention}

The motivation behind Convention~1 is that, due to the error on the reporting channel,  whenever an SU needs to report its sensing bit to a coalition head, there is a potential ``risk'' that the bit is received in error as per (\ref{eq:missclu})-(\ref{eq:err}). This risk motivates the adoption of Convention~1 whereby the SU with minimum non-cooperative miss probability, i.e., the SU with the best channel towards the PU, is used as a coalition head, and, thus, does not need to send its bit (which is the most reliable in the coalition) to other SUs in the coalition and risk the reception of this bit  in error, subsequently, affecting the performance in terms of probability of miss and false alarm. It must be noted that the analysis and algorithm discussed in the remainder of this paper can accommodate other coalition head selection approaches (e.g., select the SU closest to the center of a coalition).

It should be clear from (\ref{eq:missclu}) and (\ref{eq:falseclu}) that as the number of SUs per coalition increases, the probability of miss will decrease while the probability of false alarm will increase. This is a crucial tradeoff in CSS that can have a major impact on the collaboration strategies of each SU. Thus, our objective is to characterize the network structure that will form
when the SUs collaborate while accounting for this tradeoff. An example of the sought network structure is shown in Fig.~\ref{fig:ill}. Finding the optimal coalition structure, such as in Fig.~\ref{fig:ill}, using a centralized approach requires finding the partition of the set that minimizes the average probability of miss per SU, while maintaining a false alarm constraint per SU.  For instance, in the centralized approach, the optimization problem aims at minimizing the overall system's average probability of miss per SU given a constraint on the average false alarm probability per SU. In other words, denoting $\mathcal{B}$ as the set of \emph{all partitions} of $\mathcal{N}$, the centralized approach seeks to solve the following optimization problem
 \begin{equation*}\label{eq:cent}
\min_{\mathcal{T} \in \mathcal{B}}{\frac{\sum_{S \in \mathcal{T}}|S|\cdot Q_{m,S}}{N} },\textrm{ s.t. }Q_{f,S}\le \alpha \ \forall\ S \in \mathcal{T},
 \end{equation*}
 where $|\cdot|$ represents the cardinality of a set operator and $S$ is a coalition belonging to the partition $\mathcal{T}$. Note that, in this centralized optimization problem it is considered that the false alarm probability constraint per SU directly maps to a false alarm probability constraint \emph{per coalition} which will be more formally demonstrated in the next section through Property~\ref{prop:equ}.

 However, it is shown in \cite{NP00} that finding the optimal coalition structure in a centralized manner leads to an optimization problem which is NP-complete. This is mainly due to the fact that the number of possible coalition structures (partitions), is given by a value known as the Bell number which grows exponentially with $N$ \cite{BN00}. Moreover, in a centralized partition search approach, the objective is typically to optimize the overall system performance while ignoring the individual preferences and payoffs of the SUs (or coalition of SUs). Hence, for the centralized approach, a particular SU might be asked to be part of a coalition $S_1$ even if this SU, based on its own payoff, prefers to be part of another coalition $S_2$. Since the SUs act independently and are typically self-interested, they should be given the choice of making their coalition formation decisions on their own. Hence, there is a motivation to devise distributed approaches for the SUs to make autonomous decisions, self-organize, and form the coalitional structure.
\begin{figure}[!t]
\begin{center}
\includegraphics[width=8cm]{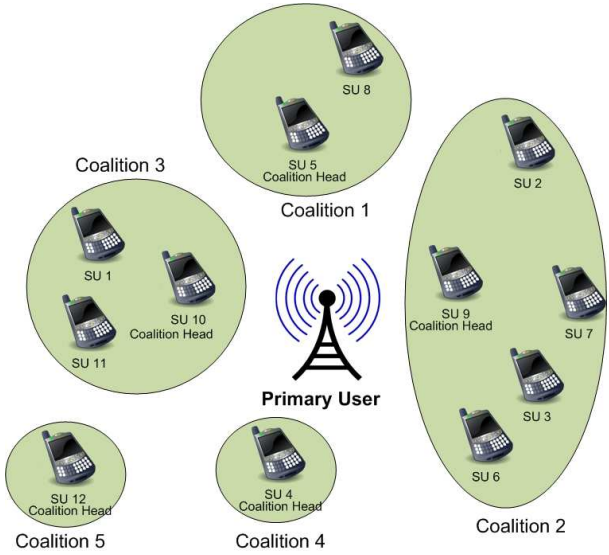}
\end{center}\vspace{-0.5cm}
\caption {An illustrative example of coalition formation for collaborative spectrum sensing among SUs.} \label{fig:ill}\vspace{-0.4cm}
\end{figure}

\section{Collaborative Spectrum Sensing As Coalitional Game}\label{sec:coal}\vspace{-0.1cm}
\subsection{Game Formulation and Properties}
For devising suitable cooperative strategies for CSS among the SUs, we refer to coalitional game theory \cite{Game_theory2,WS00,CF04}. We model distributed CSS as a coalitional game with a non-transferable utility which is defined as follows \cite{Game_theory2}\footnote{This is the general definition of an NTU game as per \cite{Game_theory2,WS00}, some literature may append other properties to the definition which are not useful for the game class dealt with in this paper (for more information on the various definitions of an NTU game see \cite{Game_theory2,WS00,CF04}).}
\begin{definition}
A coalitional game with non-transferable utility~(NTU) is defined by a pair $(\mathcal{N},V)$ where $\mathcal{N}$ is the set of players and $V$ is a mapping such that for every coalition $S \subseteq \mathcal{N}$, $V(S)$ is a closed convex subset of $\mathbb{R}^{S}$ that contains the payoff vectors that players in $S$ can achieve.
\end{definition}

In other words, a coalitional game $(\mathcal{N},V)$ is said to be with NTU if the  value or utility of a coalition cannot be arbitrarily apportioned between the coalition's players. Hence, the value of any coalition $S$ can be mapped to a set $V$ of payoff vectors.
First, we define the value (utility) $v(S)$ of a coalition $S \subseteq \mathcal{N}$ as a function that captures the tradeoff between the probability of detection and the probability of false alarm. For this purpose, $v(S)$ must be an increasing function of the detection probability $Q_{d,S} = 1 - Q_{m,S}$ within coalition $S$ and a decreasing function of the false alarm probability $Q_{f,S}$ as follows.
\setlength{\aligntop}{-1em}
\setlength{\alignbot}{-1.3\baselineskip}
\begin{align}\label{eq:util}
\!\!\!\!v(S)\! =\!  Q_{d,S}\!  - \! C(Q_{f,S},\alpha_S)\! =\!  (1 - Q_{m,S})\! -\! C(Q_{f,S},\alpha_S),
\end{align}\setlength{\aligntop}{-0.1em}\setlength{\alignbot}{-0.7\baselineskip}
\noindent where $Q_{m,S}$ is the probability of miss for coalition $S$ given by (\ref{eq:missclu}) and $C(Q_{f,S},\alpha_S)$ is a cost function which depends on the false alarm probability $Q_{f,S}$ and on a false alarm constraint $\alpha_S$ that $S$ must not exceed when it forms. Without loss of generality, we assume $\alpha_S = \alpha, \forall S \subseteq \mathcal{N}$.

In a coalitional game, one has to distinguish between two distinct quantities: The value function and the payoff. Given a coalition $S$, the value function $v(S)$ describes the overall utility that the \emph{entire} coalition $S$ receives when its acting cooperatively. In contrast, for any SU $i\in S$ the payoff $\phi_i(S)$ describes the utility that SU $i$ receives \emph{when acting as part of $S$}. Hence, $\phi_i(S)$ represents the tradeoff between probability of miss and probability of false alarm that SU $i$ achieves when acting as part of coalition $S$. In the proposed CSS game, we have the following property:

\begin{property}\label{prop:equ}
The probabilities of miss and false alarm for any SU $i \in S$ are given by the probabilities of miss and false alarm of coalition $S$ in (\ref{eq:missclu}) and (\ref{eq:falseclu}), respectively. Hence, in the proposed CSS game, the payoff, $\phi_i(S)$, of any SU $i\in S$ is simply equal to the value of the coalition, i.e., $\phi_i(S)=v(S), \ \forall\ i \in S$. As a result, the false alarm constraint per coalition $\alpha$ maps to a false alarm constraint per SU.
\end{property}

This property is an immediate result of the operation of CSS whereby the SUs belonging to a coalition $S$ will transmit or not based on the final coalition head decision after data fusion. Thus, the miss and false alarm probabilities of any SU $i \in S$ are the miss and false alarm probabilities of the coalition $S$ to which $i$ belongs as given by $Q_{m,S}$ and $Q_{f,S}$ in (\ref{eq:missclu}) and (\ref{eq:falseclu}), respectively. As an immediate consequence of this property, we notice that, in the proposed CSS game, the value of a coalition, i.e., $v(S)$, is not divisible among the SUs since the payoff of \emph{every} SU in $S$ is equal to $v(S)$ (and not to a fraction of $v(S)$). Hence, the proposed CSS game can be modeled as a coalitional game $(\mathcal{N},V)$ with NTU where $V(S)$ is a \emph{singleton} set,
\begin{equation}\label{eq:utilmap}
V(S)=\{\boldsymbol{\phi}(S) \in \mathbb{R}^{S}|\ \phi_i(S) = v(S),\ \forall i \in S\},
\end{equation}
with $v(S)$ given by (\ref{eq:util}).  $(\mathcal{N},V)$ is clearly an NTU game since the set $V(S)$ is a singleton, and hence closed and convex.

In a coalitional game with \emph{no cost}, the main interest would be in characterizing the properties and stability of the grand coalition of all players since it is generally assumed that the grand coalition maximizes the utilities of the players \cite{Game_theory2} (in such a no cost case, concepts such as the core are suitable solutions and additional properties for $V$ may be required). However, for the proposed $(\mathcal{N},V)$ game, although CSS improves the detection probability for the SUs, the false alarm costs limit this gain, and thus, we remark the following:
\begin{remark}
For the proposed $(\mathcal{N},V)$ coalitional game, the grand coalition of all the SUs \emph{seldom} forms due to the false alarm costs resulting from cooperation. Instead, disjoint independent coalitions form in the network.
\end{remark}

By inspecting $Q_{m,S}$ in (\ref{eq:missclu}) and through the results shown in \cite{CS00}, \cite{CS03} (for centralized CSS) it is clear that, as the number of SUs in a coalition increases, $Q_{m,S}$ decreases and the performance improves. Hence, when no cost for collaboration exists, the grand coalition of all SUs is the optimal structure for maximizing the detection probability. However, when the number of SUs in a coalition $S$ increases, through (\ref{eq:missclu}), the false alarm probability increases and can exceed the constraint $\alpha$, hence giving no incentive for large coalitions to form and affecting the collaboration strategies of the SUs. Therefore, for the proposed CSS model with cost for collaboration, the grand coalition of all SUs will seldom form due to the false alarm cost and constraint as accounted for in (\ref{eq:util}). Due to this property, we deal with a class of coalitional games known as coalition formation games \cite{WS00}. Hence, the solution for the proposed $(\mathcal{N},V)$ CSS coalition formation game is the network structure which can characterize the CSS trade off between gains (probability of miss) and costs (false alarm).
\subsection{Cost Function}
Any well designed cost function $C(Q_{f,S},\alpha)$ in (\ref{eq:util}) must satisfy several requirements for adequately modeling the false alarm. On one hand, $C(Q_{f,S},\alpha)$ must be an increasing function of $Q_{f,S}$ with the slope becoming steeper as $Q_{f,S}$ increases. On the other hand, $C(Q_{f,S},\alpha)$ must impose a maximum tolerable false alarm probability $\alpha$ that cannot be exceeded by any SU (as per Property~1, imposing a false alarm constraint on the coalition maps to a constraint per SU). A well suited cost function satisfying these requirements is the logarithmic barrier penalty function given by \cite{BO00}
\begin{align}\label{eq:logbarr}
C(Q_{f,S},\alpha) = \begin{cases}- \alpha^2\cdot\log{\left(1-\left(\frac{Q_{f,S}}{\alpha}\right)^2\right)},& \mbox{if } Q_{f,S} < \alpha,\\ +\infty,&
\mbox{if } Q_{f,S} \ge \alpha, \end{cases}
\end{align}
where $\log$ is the natural logarithm and $\alpha$ is the false alarm constraint per coalition.  The cost function in (\ref{eq:logbarr}) incurs a penalty which is increasing with the false alarm probability. Moreover, it imposes a maximum false alarm probability $\alpha$ per coalition (per SU by Property~1). In addition, as the false alarm probability gets closer to $\alpha$, the collaboration cost increases steeply, requiring a significant improvement in detection probability if the SUs wish to collaborate as per (\ref{eq:util}). Note that the proposed cost function depends on both distance and the number of SUs in the coalition, through the false alarm probability $Q_{f,S}$ (the distance lies within the probability of error). Hence, the cost for collaboration increases with the number of SUs in the coalition as well as when the distance between the coalition's SUs increases.

\section{Coalition Formation Games in CSS}\label{sec:coalform}
\subsection{Distributed Coalition Formation Algorithm (CF)}\label{sec:mergeandsplit}
Coalition formation is a branch of game theory that investigates algorithms for studying the coalitional structures that form in a network where there is a cost for forming coalitions and where the grand coalition is not optimal \cite{WS00,NP00,KA01,CF04}. For constructing a distributed coalition formation algorithm for CSS, we use the following concepts \cite{WS00,KA01}.
\begin{definition}
A \emph{collection} $\mathcal{S}$ is the set $\mathcal{S} = \{S_{1},\ldots,S_{l}\}$ of mutually disjoint coalitions $S_{i} \subseteq \mathcal{N}$. If the collection spans \emph{all} the players of $ \mathcal{N}$, i.e., $\bigcup_{j=1}^{l} S_j =  \mathcal{N}$, the collection is a \emph{partition} of $\mathcal{N}$. 
\end{definition}
\begin{definition}
A \emph{comparison relation} $\rhd$ is defined for comparing two collections $\mathcal{R} = \{R_{1},\ldots,R_{l}\}$ and
$\mathcal{S} = \{S_{1},\ldots,S_{m}\}$ that are partitions of the same subset $\mathcal{A} \subseteq \mathcal{N}$ (same players in $\mathcal{R}$ and $\mathcal{S}$). Thus, $\mathcal{R} \rhd \mathcal{S}$ implies that the way $\mathcal{R}$ partitions $\mathcal{A}$ is preferred to the way $\mathcal{S}$ partitions $\mathcal{A}$ based on criteria defined next.
\end{definition}

Various criteria (referred to as \emph{orders}) can be used as comparison relations between collections or partitions \cite{WS00,KA01}. Due to the non-transferable nature of the proposed $(\mathcal{N},V)$ CSS game, an individual value order that compares the individual payoffs of the players must be used as a comparison relation $\rhd$. An adequate individual value order that can be used in CSS is the \emph{Pareto order} \cite{WS00,KA01}. Denote for a collection $\mathcal{R} = \{R_{1},\ldots,R_{l}\}$, the utility of a player $j$ in a coalition $R_j \in \mathcal{R}$ by $\phi_j(\mathcal{R})=\phi_j(R_j)=v(R_j)$ (Property~1); the Pareto order is defined as
\setlength{\aligntop}{-0.4em}
\setlength{\alignbot}{-1.6\baselineskip}
\begin{align}\label{eq:Pareto}
\mathcal{R} \rhd \mathcal{S} \Longleftrightarrow \{\phi_{j}(\mathcal{R}) \ge
\phi_{j}(\mathcal{S}) \  \forall \ j \in \mathcal{R},\mathcal{S}\},   \\ \textrm{ with \emph{at
least one strict inequality} ($>$) for a player } k. \nonumber
\end{align}
\setlength{\aligntop}{-0.1em}
\setlength{\alignbot}{-0.7\baselineskip}
Using the Pareto order as a comparison relation, we propose a distributed coalition formation algorithm based on two rules called ``merge'' and ``split'' that
allow to modify a partition $\mathcal{T}$ of the SUs set $\mathcal{N}$ as follows \cite{KA01}.
\begin{definition}
\textbf{Merge Rule -} Merge any set of coalitions
$\{S_{1},\ldots,S_{l}\}$ where $\{\cup_{j=1}^{l}S_{j}\} \rhd \{S_{1},\ldots,S_{l}\} $; therefore, {$\{S_{1},\ldots,S_{l}\}
\rightarrow \{\cup_{j=1}^{l}S_{j}\}$}.
\end{definition}
\begin{definition}
\textbf{Split Rule -} Split any coalition
$\cup_{j=1}^{l}S_{j}$ where  $\{S_{1},\ldots,S_{l}\} \rhd \{\cup_{j=1}^{l}S_{j}\} $; thus, $\{\cup_{j=1}^{l}S_{j}\}
\rightarrow \{S_{1},\ldots,S_{l}\}$.
\end{definition}

Using the above rules, multiple coalitions can interact, taking the decision to merge or split based on the comparison relation $\rhd$. For the Pareto order, coalitions decide to merge (split) only
if at least one SU is able to \emph{strictly} improve its individual utility through this merge (split) process without decreasing the other SUs' utilities. Therefore, the merge rule by Pareto order is a binding agreement among the SUs to operate together in the sensing phase if it is beneficial for them; and this agreement is partially reversible; i.e., can be reversed only by an agreement to split. Note that, coalition formation entails algorithms where agreements can be either irreversible, partially reversible, or fully reversible \cite{CF04}. As the algorithms become more reversible, their complexity and implementation can increase while their flexibility improves and this motivates using a partially reversible algorithm such as the one we propose which has a balance between implementation and flexibility \cite{CF04}. The use of these rules for CSS is described in details further in this section.
%

As a consequence, for the proposed CSS game, we build a distributed coalition formation~(CF) algorithm based on merge-and-split and it is divided into three phases: Discovery, adaptive coalition formation, and coalition sensing. In the first phase, each SU discovers neighboring SUs as well as information required for performing coalition formation, e.g., distance to the PU and the distance of neighboring SUs. The discovery phase is discussed in details in Subsection~\ref{sec:nd}. Following the discovery phase, adaptive coalition formation begins and the SUs (or existing coalitions of SUs) interact in order to assess whether to cooperate for sharing their sensing results with discovered coalitions. For this purpose, an iteration of sequential merge-and-split rules occurs in the network, whereby each coalition decides to merge (or split) depending on the utility improvement. In this phase, as time evolves and SUs (or the PU) move, the SUs can autonomously self-organize and adapt the network's structure through new merge-and-split iterations with each coalition taking the decision to merge (or split) subject to satisfying the merge (or split) rule through Pareto order (\ref{eq:Pareto}). In the final coalition sensing phase, once the network partition converges following merge-and-split, SUs that belong to the same coalition report their local sensing bits to their local coalition head. The coalition head subsequently uses decision fusion OR-rule \cite{CS00,CS03,CS04} to make a final decision on the presence or the absence of the PU. This decision is then reported by the coalition heads to all the SUs within their respective coalitions.  One round of the CF algorithm is summarized in Table~\ref{tab:alg1} and the detailed operation is as follows:
 \begin{table}[!t]
  \centering
  \caption{
    \vspace*{-0.2em}One round of the proposed CF algorithm.}\vspace*{-1em}
    \begin{tabular}{p{8cm}}
      \hline
      \textbf{Initial State} \vspace*{.5em} \\
      \hspace*{1em} The network is partitioned by $\mathcal{T}=\{T_1,\ldots,T_k\}$ (At the beginning of all time $\mathcal{T}$ = $\mathcal{N}$ = $\{1,\ldots,N\}$ with non-cooperative SUs). \vspace*{.1em}\\
\textbf{Three phases in each round of the CF algorithm} \vspace*{.5em}\\
\hspace*{1em}\emph{Phase 1 - Discovery:}   \vspace*{.1em}\\
\hspace*{2em}Each SU (or coalition of SUs) discovers its neighboring\vspace*{.1em}\\
\hspace*{2em}coalitions as per Subsection~\ref{sec:nd}.\vspace*{.1em}\\
\hspace*{1em}\emph{Phase 2 - Adaptive Coalition Formation:}   \vspace*{.1em}\\
\hspace*{2em}In this phase, coalition formation using merge-and-split occurs. \vspace*{.2em}\\
\hspace*{3em}\textbf{repeat}\vspace*{.2em}\\
\hspace*{4em}a) $\mathcal{F}$ = Merge($\mathcal{T}$); coalitions in $\mathcal{T}$ decide to merge based\vspace*{.2em}\\
\hspace*{5em}on the merge algorithm explained in Subsection~\ref{sec:mergeandsplit}.\vspace*{.2em}\\
\hspace*{4em}b) $\mathcal{T}$ = Split($\mathcal{F}$); coalitions in $\mathcal{F}$ decide to split based on \hspace*{5em}the Pareto order.\vspace*{.2em}\\
\hspace*{3em}\textbf{until} merge-and-split converges.\vspace*{.2em}\\
\hspace*{1em}\emph{Phase 3 - Coalition Sensing:}   \vspace*{.1em}\\
\hspace*{2em}a) Each SU reports its sensing bit to the coalition head.\vspace*{.1em}\\
\hspace*{2em}b) The coalition head of each coalition makes a final decision on\vspace*{.1em}\\
\hspace*{2em}the absence or presence of the PU using decision fusion OR-rule.\vspace*{.1em}\\
\hspace*{2em}c) The SUs in a coalition abide by the final decision made by the\vspace*{.1em}\\
\hspace*{2em}coalition head.\vspace*{.1em}\\
\textbf{The above phases are repeated periodically throughout the network operation. In Phase~2, through distributed merge-and-split decisions, the SUs can autonomously adapt the network structure to environmental changes such as mobility. } \vspace*{.4em}\\
   \hline
    \end{tabular}\label{tab:alg1}\vspace{-0.3cm}
\end{table}

Each round of the three phases of the CF algorithm of Table~\ref{tab:alg1} starts from an initial network partition $\mathcal{T}_0 =\{T_1,\ldots,T_l\}$ of $\mathcal{N}$.  Following neighbor discovery each coalition $T_i \in \mathcal{T}_0$ can acquire the following information (see Subsection~\ref{sec:nd} for details):
\begin{itemize}
\item The probabilities of miss of the SUs in $T_i$, i.e., each SU in $T_i$ is aware of its own probability of miss.
\item The existence of other coalitions with whom a potential cooperation can be done. Hence, each coalition $T_i \in \mathcal{T}_0$ would have a list of all other potential coalitions with whom a cooperation is possible.
\item The distances between the SUs in $T_i$ and the SUs in the discovered coalitions.
\end{itemize}

Following the acquisition of this information through the discovery phase, the adaptive coalition formation phase of the algorithm begins and consists of the merge and split procedures. During the merge procedure, each SU (or coalition of SUs that would form afterwards) engages in pairwise negotiations (i.e., one-on-one) with discovered neighbors, over a control channel. One example of such a control channel is the ad hoc temporary control channel \cite{ADHOC00} that can be temporarily established by each SU (or coalition) for the purpose of negotiations (these temporary control channels are widely used in ad hoc networks \cite{ADHOC00} and, more recently, in cognitive networks  \cite{ND00,ND01,ZD00,ADHOC00}). By signalling over the control channels, two coalitions (or SUs) can inform each other of the identities of their members (e.g., the identity can be defined by the member's position in space) and, in the event of a potential merge, can decide to join together into a single coalition. For example, at the beginning of all time, when all the network is non-cooperative, following neighbor discovery, each SU $i \in \mathcal{N}$ can, over a control channel, start negotiating, in a pairwise manner with other discovered SUs. When negotiating with a discovered coalition or SU, SU $i$ can send its information and an intent to merge over the control channel. Subsequently, depending on the Pareto order as per (\ref{eq:Pareto}), the other coalition can respond with its own information and an approval or denial of the merge.

Essentially, the \emph{merge process} occurs as follows: Given partition $\mathcal{T}_0$, each coalition $T_i \in \mathcal{T}_0$ negotiates over a control channel (e.g., through its coalition head) with the list of neighbors for assessing a potential merge. Proceeding sequentially through the list of discovered neighbors, each coalition head $k\in T_i$, attempts to merge with one of discovered neighbors $T_{\textrm{disc}}$ (at the beginning of all time all the discovered neighbors are non-cooperative SUs) as follows (the sequence of this procedure is assumed to be arbitrary depending on which neighbor was first discovered):

\begin{enumerate}

\item Coalition head $k \in T_i,\ \forall T_i \in \mathcal{T}_0$ signals an intent to merge with $T_{\textrm{disc}}$ by  sending, over the control channel, the information (probabilities of miss and distance estimates) on its own coalition, i.e., $T_i$, to one of the SUs in the coalition $T_{\textrm{disc}}$.
\item The SU in $T_{\textrm{disc}}$ which receives this information, computes, in cooperation with the members of its coalition $T_{disc}$, the potential probability of miss that would result from the merge with $T_i$ as per (\ref{eq:missclu}), and, following this computation, coalition $T_{\textrm{disc}}$ would respond to $k$ with one of the following results:
    \begin{itemize}
    \item If the Pareto order as per (\ref{eq:Pareto}) is not verified, i.e., the merger would decrease the payoff of one or more of the coalition members in $T_i$ or $T_{disc}$, then $T_{\textrm{disc}}$ responds with a small packet signalling the failure of the merger as well as providing information on its current members. The information on the current members allows $k$ and $T_i$ to get a better idea of the existing partition for future merge decisions.
            \item If the Pareto order as per (\ref{eq:Pareto}) is verified, i.e., cooperation improves the payoffs of the involved SUs, then $T_{\textrm{disc}}$ responds with a small packet signalling the acceptance of the merger as well as providing information on its current members. Subsequently, the two coalitions would merge into a new coalition $T_{\textrm{new}} \cup T_{\textrm{disc}}$ and they would be aware of each other following the negotiations phase. Subsequently, the new coalition $T_{\textrm{new}}$ would repeat the above procedure, given the union of the lists of neighbors from $T_i$ and $T_{\textrm{disc}}$.
    \end{itemize}

    \item Coalition $T_i$ continues the above procedure until it merges or until no further merge possibility exist.
\end{enumerate}
Hence,  two coalitions would merge through negotiations over a control channel, and they can inform each others of the identities of their members through this channel when the merge procedure above is taking place. Note that, although we consider that the order in which the SUs discover their neighbors (and goes through them for merge) is arbitrary, one can always define other metrics such as the average distance between the members which can be used as a distance between coalitions and, subsequently, used to order the discovered neighbors. In many cases, from the perspective of a coalition $S$, any neighbor discovery algorithm is likely to find coalitions whose members are at a closer distance from the members of $S$, before those who are located far away.

Moreover, the aforementioned procedure shows that, in general, the SUs do not need to estimate the probability of miss of their neighbors. In contrast, during the pairwise merge search, each SU (or coalition) would append the values of the probabilities of miss of its members along with the intent to merge message that is sent to one of its neighbors. In response, the neighbor can communicate, along with the merge decision, the information on the probability of miss of its members\footnote{In case the SUs, using the techniques of Subsection~\ref{sec:nd}, are able to estimate the probability of miss of their neighbors, they may be able to reduce the information exchange overhead.}.  Certainly, we assume that each SU is aware of its \emph{own} probability of miss and the methods that each SU can use to find its \emph{own} probability of miss are discussed in details in Subsection~\ref{sec:nd}. Note that, once the SUs of two coalitions exchange their probabilities of miss, they would be able to work out which member is potentially the coalition head as per Convention~1.

Following the merge process, the coalitions are next subject to split operations, if any is possible. Hence, any coalition which can no longer merge would start to apply the split operation. As the members are part of the same coalition, assessing the payoffs yielded by a split form is easily performed since the members are aware of the identities and characteristics of their partners. An iteration consisting of multiple successive merge-and-split operations is repeated until it terminates (the convergence of an algorithm constructed using merge-and-split rules is guaranteed \cite{KA01}). It must be stressed that the merge or split decisions are taken in a distributed way without relying on any centralized entity as each SU or coalition makes its own merge or split decisions.

The merge-and-split procedure of Phase~2 is performed sequentially for all coalitions in the network, in a certain order (in practice there is always a fraction of time of difference between the action of a coalition/SU or the next one, which dictates an order of operation). The order in which the coalitions/SUs act is considered arbitrary which is often the case in a practical cognitive network. As a result, any arbitrary SU (or coalition) can initiate the merge and split procedures. Hence, once the SUs have operated non-cooperatively for a while and performed the  discovery phase, they would start, in an arbitrary order, to perform coalition formation in Phase~2 through merge-and-split. In general, all SUs will always have an incentive to seek cooperation since it might improve their payoff. Nonetheless, the proposed algorithm is always applicable regardless of the order in which the SUs/coalitions act, and, thus, a network operator can impose a certain order for coalition formation on its SUs if needed (e.g., let the SU with worst or best performance start and so on).

Further, it is interesting to note that, under certain conditions (see Subsection~\ref{sec:stab} and Lemma~1) on the network partition, the order of operation does not affect the final structure that will form in the network. Nonetheless, even when different orders can yield different networks structures, due to the definition of the merge and split operation through the Pareto order in (\ref{eq:Pareto}), it is always ensured that, any resulting partition would improve the payoffs of the SUs as compared to the previous state. This is due to the fact that any merge or split agreement makes sure that at least one SU benefits from this operation. Hence, the proposed merge-and-split algorithm, independent of the order, is guaranteed to yield a cooperative network which outperforms (or is at least as good as) the non-cooperative approach.

For handling environmental changes such as mobility (of the SUs or the PU) or the joining/leaving of SUs, Phase~2 of the proposed algorithm in Table~\ref{tab:alg1} is repeated periodically over time. In Phase~2, periodically, as time evolves and SUs (or the PU) move or join/leave,  the SUs can take distributed decisions to adapt the network's structure through new merge-and-split iterations with each coalition taking the decision to merge (or split) subject to satisfying the merge (or split) rule through the Pareto order in (\ref{eq:Pareto}).

For this purpose, one can define a period of time $\theta$ which specifies the time elapsed between two consecutive runs of the proposed CF algorithm.  Consequently, there would be a tradeoff between the number of runs, i.e., overhead for coalition formation and the adaptation to the dynamics of the environment. Hence, for environments that are static or varying very slowly, $\theta$ would be large. In contrast, for highly varying environments, $\theta$ would have a small value, and, thus, enabling adaptation to rapidly changing environments. The value of $\theta$ can be regulated by the SUs depending on their observations on the environment. For example, given an initial value of $\theta=\theta_0$, whenever the SUs notice that, during merge-and-split, a lot of changes in the network and the partition have occurred, they can decide to reduce $\theta$ to a value smaller than $\theta_0$. In contrast, whenever the SUs note that few changes have occurred, they can increase $\theta$ to a value above $\theta_0$. For example, the possible values for the period $\theta$ can be standardized so that the SUs would know it without the need of a centralized controller. The above procedure would ensure the convergence of the merge-and-split algorithm. Nonetheless, certainly, we consider only environments where the changes during a single run of the proposed algorithms are small enough so as not to drastically affect the utilities, and, thus, the merge or split decisions of the SUs. Note that, performing instantaneous coalition formation is still a challenging task in game theory (e.g., see \cite{CF04} and references therein), and can be considered in future extensions of this work.


It must be also noted that almost no synchronization among the SUs is needed to perform coalition formation since the order in which the merge-and-split operations proceed can be arbitrary (or synchronized if an operator wishes to do so). In some sense, the SUs would need only some kind of quasi-synchronization for handshaking and negotiations over the control channel to communicate their spectral sensing information. The proposed merge-and-split CF process involves negotiations that occur in a pairwise manner with coalitions (or SUs) in a small neighboring area and not in the entire network. In addition, as soon as a coalition identifies a merge possibility, it does not need to go through the remainder of the discovered neighbor. This aspect further corroborates the little need for synchronization among the SUs.

In order to show that the SUs need to survey only a specific geographical area for finding merge partners without looking at all other possible SUs in the network, we inspect the different merge possibilities between two SUs. In this regards, given two SUs involved in CSS through CF, we have the following theorem (proof is in the appendix):
\begin{theorem}\label{th:one}
 Two SUs $i$ and $j$ with non-cooperative probabilities of miss $P_{m,i}$ and $P_{m,j}$ respectively, such as $P_{m,i} \le P_{m,j}$, can merge into one coalition if the probability of error is below a
maximum probability of error, i.e., \emph{if} $P_{e,j,i} \le \tilde{P}_{e,j,i}$, where $\tilde{P}_{e,j,i}$ is given by the following tight approximation
\begin{align}\label{eq:th}
\tilde{P}_{e,j,i} \approx \frac{ P_f(P_f -2) + \alpha \sqrt{(1-(1-\frac{P_f^2}{\alpha^2})e^{\frac{P_{m,i}(1-\eta)(P_{m,j}-1) }{\alpha^2}} )}}{(2P_f -1)(P_f - 1)},
\end{align}
with $ \eta = \alpha$, if $P_{m,j} \le \frac{1}{2}$ and $\eta= 0$ otherwise. Consequently, a corresponding minimum distance $\tilde{d}_{i,j}$ between the two SUs can be easily found by solving (\ref{eq:err}) for $\tilde{P}_{e,j,i}$.
\end{theorem}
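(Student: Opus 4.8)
The plan is to convert the merge criterion of the Pareto order (\ref{eq:Pareto}) into an explicit inequality in $P_{e,j,i}$ and then invert it. By Convention~1, since $P_{m,i}\le P_{m,j}$, SU~$i$ is the coalition head, so $P_{e,i,i}=0$. First I would evaluate the value (\ref{eq:util}) on the singletons and on the merged coalition. For $\{i\}$ and $\{j\}$ the products in (\ref{eq:missclu})--(\ref{eq:falseclu}) collapse to $Q_{m,\{i\}}=P_{m,i}$, $Q_{m,\{j\}}=P_{m,j}$ and $Q_{f,\{i\}}=Q_{f,\{j\}}=P_f$, so that $v(\{i\})=(1-P_{m,i})-C(P_f,\alpha)$ and $v(\{j\})=(1-P_{m,j})-C(P_f,\alpha)$. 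For the merged coalition $S=\{i,j\}$ with head~$i$, the same formulas give
\begin{align}
Q_{m,S} &= P_{m,i}\big[P_{m,j}+P_{e,j,i}(1-2P_{m,j})\big],\nonumber\\
Q_{f,S} &= P_f(2-P_f)-(1-P_f)(2P_f-1)P_{e,j,i}.\nonumber
\end{align}

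Because Property~1 makes $V(S)$ a singleton, every member of $S$ earns the common payoff $v(S)$; hence the Pareto order makes the merge worthwhile exactly when $v(S)\ge v(\{i\})$ and $v(S)\ge v(\{j\})$ with at least one strict inequality. Since $P_{m,i}\le P_{m,j}$ forces $v(\{i\})\ge v(\{j\})$, the binding condition is $v(S)\ge v(\{i\})$ (and SU~$j$ benefits strictly whenever $P_{m,i}<P_{m,j}$). Using (\ref{eq:util}) this becomes the gain-versus-cost inequality
\begin{align}
P_{m,i}-Q_{m,S}\ \ge\ C(Q_{f,S},\alpha)-C(P_f,\alpha),\nonumber
\end{align}
whose right-hand side, after inserting the logarithmic barrier (\ref{eq:logbarr}), equals $-\alpha^2\log\big[(1-(Q_{f,S}/\alpha)^2)/(1-(P_f/\alpha)^2)\big]$.

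The obstacle is that this condition is transcendental in $P_{e,j,i}$: the variable enters the detection gain $P_{m,i}\big[(1-P_{m,j})-P_{e,j,i}(1-2P_{m,j})\big]$ linearly but also sits inside the logarithm through $Q_{f,S}$. The decisive step I would take to reach the closed-form (hence \emph{approximate}) threshold is to decouple the two sides by replacing the exact gain with the $P_{e,j,i}$-independent quantity $G:=P_{m,i}(1-\eta)(1-P_{m,j})$. The switch $\eta=\alpha$ for $P_{m,j}\le\tfrac12$ and $\eta=0$ otherwise is dictated by the sign of $(1-2P_{m,j})$: when $P_{m,j}>\tfrac12$ the correction term augments the gain and is conservatively dropped, while for $P_{m,j}\le\tfrac12$ it reduces the gain and is absorbed into the factor $(1-\eta)=(1-\alpha)$. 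Arguing that this substitution is tight over the physical range of parameters is the crux of the proof.

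With the gain replaced by $G$, the inequality $G\ge-\alpha^2\log[\cdots]$ is algebraic: exponentiating and isolating $Q_{f,S}$ gives $Q_{f,S}\le\alpha\sqrt{1-(1-P_f^2/\alpha^2)\,e^{-G/\alpha^2}}$, where $-G/\alpha^2=P_{m,i}(1-\eta)(P_{m,j}-1)/\alpha^2$ is exactly the exponent appearing in (\ref{eq:th}). Finally I would substitute the linear expression for $Q_{f,S}$ above and solve for $P_{e,j,i}$; since the coefficient $(1-P_f)(1-2P_f)=(2P_f-1)(P_f-1)$ is positive in the operating regime $P_f<\tfrac12$, the inequality direction is preserved and the bound collapses to precisely (\ref{eq:th}). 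The minimum distance $\tilde d_{i,j}$ then follows by inverting the monotone map (\ref{eq:err}) from $P_{e,j,i}$ (through $\bar\gamma_{i,j}$ and $h_{i,j}=\kappa/d_{i,j}^{\mu}$) to $d_{i,j}$, evaluated at $P_{e,j,i}=\tilde P_{e,j,i}$.
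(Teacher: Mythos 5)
Your proposal is correct in outline and takes essentially the same route as the paper's appendix proof: reduce the Pareto-order merge condition to $v(S) > v(\{i\})$ with SU $i$ as head (Convention~1), recognize the resulting condition as transcendental in $P_{e,j,i}$, decouple it by freezing the $P_{e,j,i}$-dependent detection gain at a worst-case constant (with the case split on the sign of $1-2P_{m,j}$), and then invert the remaining algebraic inequality in $Q_{f,S}=A\,P_{e,j,i}+B$, where $A=(2P_f-1)(P_f-1)$ and $B=P_f(2-P_f)$. Your algebra at the last step is equivalent to, and slightly cleaner than, the paper's: the paper expands the decoupled condition into a quadratic in $P_{e,j,i}$, computes its discriminant, and spends several lines proving the discriminant is positive and that one root is negative, whereas your exponentiate-and-isolate-$Q_{f,S}$ formulation makes positivity of the square-root argument automatic (since $G>0$ and $0<1-P_f^2/\alpha^2<1$) and discards the spurious root for free.

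The one genuine gap is the step you yourself flag as ``the crux'': why, for $P_{m,j}\le\tfrac12$, the correction term can be absorbed into the factor $(1-\eta)$ with $\eta=\alpha$ specifically. In the paper this is not an unexplained modeling choice but a short feasibility argument: the barrier cost in (\ref{eq:logbarr}) is finite only if $Q_{f,S}=A\,P_{e,j,i}+B<\alpha$, so any $P_{e,j,i}$ that can possibly satisfy the merge inequality must obey $P_{e,j,i}<(\alpha-B)/A$; since $P_f$ is small in the operating regime, $A\approx 1$ and $B\approx 0$, so the feasible range is capped at roughly $\alpha$, and the worst-case (gain-minimizing) value of $P_{e,j,i}$ on that range is $\bar P_{e,j,i}=\alpha$ --- this is precisely where the $\alpha$ in the exponent of (\ref{eq:th}) comes from (and when $P_{m,j}>\tfrac12$ the gain is minimized at $\bar P_{e,j,i}=0$, giving $\eta=0$, as you correctly state). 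Without this cap, the substitution $G=P_{m,i}(1-\alpha)(1-P_{m,j})$ is unmotivated and the claimed bound is not established as a sufficient condition; folding in this two-line argument completes your proof and makes it match the paper's.
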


From Theorem~\ref{th:one}, each SU $i$ can determine merge regions, corresponding to arcs on circles and their respective intercepting angles, for forming a coalition with an SU $j$ where $P_{m,i} \le P_{m,j}$. For a given detection threshold $\lambda$, one can see through (\ref{eq:missprob}) that each $P_{m,j}$ corresponds to \emph{one circle} centered at the PU with a radius $r_j$ hence determining a class of SUs having the same non-cooperative probability of miss. For each such circle, an SU $i$ maps the condition  $P_{e,j,i} \le \tilde{P}_{e,j,i}$ to an arc on these circles, with a corresponding intercepting angle (angle centered at SU $i$) and intercepting the merge region arc. However, for any two SUs $i$ and $j$ belonging to two circles of radii $r_i$ and $r_j$, the distance $\tilde{d}_{i,j}$ between the two SUs (from Theorem~1) must satisfy $\tilde{d}_{i,j} < (r_j - r_i)$, otherwise no merge is possible (the merge region is empty). A detailed analysis of these merge regions, using Theorem~\ref{th:one}, is presented in Section~\ref{sec:sim}. Note that, even if no closed form expression exists for the upper bound $\tilde{P}_{e,j,i}$, $P_{e,j,i}$ will always be bounded by a value $\tilde{P}_{e,j,i}$ which would be a function of the probabilities of miss and false alarm of the two SUs. This is a direct consequence of the fact that, whenever the error on the reporting channel increases, it will incur an increased probability of miss and an increased false alarm as per (3) and (4) which, in turn, would limit the gains from cooperation and sets an upper bound on the error level that would be tolerated for the merge to occur.

\indent An upper bound on the maximum coalition size is imposed by the proposed utility and  models as follows:
\begin{theorem}
For the proposed CSS model, given that the SUs collaborate for improving their payoff given the tradeoff between probability of miss and false alarm, any coalitional structure resulting from the distributed coalition formation algorithm will have coalitions limited in size to a maximum of
\begin{align}
M_{\max}=\frac{\log{(1-\alpha)}}{\log{(1-P_f)}}
\end{align}
SUs.
\end{theorem}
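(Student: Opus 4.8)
The plan is to reduce the size bound to a single algebraic feasibility constraint: any coalition that actually survives the CF algorithm must have $Q_{f,S} < \alpha$, and then to lower-bound $Q_{f,S}$ by a quantity depending on $|S|$ only. First I would establish that any coalition $S$ produced by merge-and-split necessarily satisfies $Q_{f,S} < \alpha$. Indeed, by the cost function (\ref{eq:logbarr}), whenever $Q_{f,S} \ge \alpha$ we have $C(Q_{f,S},\alpha)=+\infty$, so by (\ref{eq:util}) the value $v(S)=-\infty$, and by Property~\ref{prop:equ} every member's payoff is $\phi_i(S)=v(S)=-\infty$. Such a coalition can never be the outcome of a merge governed by the Pareto order (\ref{eq:Pareto}): a single non-cooperative SU $i$ already enjoys a finite payoff $v(\{i\}) = (1-P_{m,i}) - C(P_f,\alpha)$ (finite because the constraint necessarily satisfies $P_f<\alpha$, as otherwise even one SU is infeasible), so no SU would weakly improve by joining a coalition whose payoff is $-\infty$. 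Hence only coalitions with $Q_{f,S} < \alpha$ survive.

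The second step is to bound $Q_{f,S}$ from below by a function of $|S|$ alone. I would rewrite each per-member factor of (\ref{eq:falseclu}) as
\begin{equation}
(1-P_f)(1-P_{e,i,k}) + P_f P_{e,i,k} = (1-P_f) - (1-2P_f)P_{e,i,k}.
\end{equation}
For the coalition head $k$ there is no reporting error, so $P_{e,k,k}=0$ and its factor is exactly $1-P_f$; for every other member $P_{e,i,k}\in[0,\tfrac12]$ by (\ref{eq:err}). In the practically relevant regime $P_f<\tfrac12$ the coefficient $(1-2P_f)$ is positive, so each factor is at most $1-P_f$. Therefore the product in (\ref{eq:falseclu}) is at most $(1-P_f)^{|S|}$, which gives
\begin{equation}
Q_{f,S} \ge 1 - (1-P_f)^{|S|}.
\end{equation}
This lower bound is simply the false alarm attained under ideal error-free reporting; any real channel errors only increase $Q_{f,S}$ further.

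Combining the two steps, every coalition that forms obeys $1-(1-P_f)^{|S|} \le Q_{f,S} < \alpha$, i.e.\ $(1-P_f)^{|S|} > 1-\alpha$. Taking logarithms and dividing by $\log(1-P_f)<0$, which reverses the inequality, yields $|S| < \log(1-\alpha)/\log(1-P_f) = M_{\max}$, establishing the claimed upper bound on coalition size.

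I expect the main obstacle to be the sign and monotonicity bookkeeping in the second step. The clean power $(1-P_f)^{|S|}$ arises precisely from the head contributing the exact factor $1-P_f$ together with the remaining members contributing factors no larger than $1-P_f$, and the latter inequality is exactly where the hypothesis $P_f<\tfrac12$ is needed. The remainder is routine, requiring care only with the inequality flip when dividing by the negative quantity $\log(1-P_f)$.
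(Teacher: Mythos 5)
Your proof is correct and follows essentially the same route as the paper's: both arguments reduce to the observation that the error-free (perfect reporting) coalition attains the minimal false alarm $1-(1-P_f)^{|S|}$ for a given size, and that the logarithmic barrier makes any coalition with $Q_{f,S}\ge\alpha$ infeasible, which together force $(1-P_f)^{|S|}>1-\alpha$ and hence the bound $M_{\max}$. Your write-up is simply more explicit where the paper asserts: you justify via the Pareto order why a $-\infty$-payoff coalition can never survive merge-and-split, and you prove the per-factor inequality $(1-P_f)-(1-2P_f)P_{e,i,k}\le 1-P_f$ (noting the needed regime $P_f<\tfrac12$, which the paper leaves implicit) rather than merely stating that the perfect coalition accommodates the most SUs.
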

\begin{proof}
In the proposed algorithm, the benefit from collaboration is limited by the false alarm probability cost modeled by the barrier function (\ref{eq:logbarr}). A minimum false alarm cost in a coalition  $S$ with coalition head $k\in S$ exists whenever the reporting channel is perfect, i.e., exhibiting no error, hence, $P_{e,i,k} = 0\ \forall i\in S$. In this perfect case, the false alarm probability in a perfect coalition $S_p$ is given by
\setlength{\aligntop}{-0.6em}
\setlength{\alignbot}{-1\baselineskip}
\begin{align}\label{eq:perf}
Q_{f,S_p} =  1 - \displaystyle\prod_{i \in S_p}  (1-P_{f}) = 1- (1-P_f)^{|S_p|},
\end{align}
\setlength{\aligntop}{-0.1em}
\setlength{\alignbot}{-0.7\baselineskip}
\noindent where $|S_p|$ is the number of SUs in the perfect coalition $S_p$. A perfect coalition $S_p$ where the reporting channels inside are perfect (i.e., SUs are grouped very close to each other) accommodates the largest number of SUs relative to all other coalitions. Hence, we use this perfect coalition to find an upper bound on the maximum number of SUs per coalition. For instance, the log barrier function in (\ref{eq:logbarr}) tends to infinity whenever the false alarm  constraint per coalition is reached which implies an upper bound on the maximum number of SUs per coalition if
$Q_{f,S_p}\ge \alpha,$
yielding by (\ref{eq:perf}), $|S_p| \le \frac{\log{(1-\alpha)}}{\log{(1-P_f)}} = M_{\max}$.
\end{proof}

It is interesting to note that the maximum size of a coalition $M_{\max}$ depends mainly on two parameters: the false alarm constraint $\alpha$ and the non-cooperative false alarm $P_f$. For instance, larger false alarm constraints allow larger coalitions, as the maximum tolerable cost limit for collaboration is increased. Moreover, as the non-cooperative false alarm $P_f$ decreases, the possibilities for collaboration are better since the increase of the false alarm due to coalition size becomes smaller as per (\ref{eq:falseclu}).  It must be noted that the dependence of $M_{\max}$ on $P_f$ yields a direct dependence of $M_{\max}$ on the energy detection threshold $\lambda$ as per (\ref{eq:falseind}). Finally, it is interesting to see that the upper bound on the coalition size does not depend on the location of the SUs in the network nor on the actual number of SUs in the network. Therefore, deploying more SUs or moving the SUs in the network for a fixed $\alpha$ and $P_f$ does not increase the upper bound on coalition size.

\mysubsection{Partition Stability}\label{sec:stab}
The result of the proposed algorithm is a
network partition composed of disjoint independent coalitions of SUs. The stability of this resulting network structure can be investigated by a defection function $\mathbb{D}$ \cite{KA01}.
\begin{definition}
A \emph{defection} function $\mathbb{D}$ is a function which
associates with each partition $\mathcal{T}$ of $\mathcal{N}$ a group of
collections in $\mathcal{N}$. A partition $\mathcal{T} = \{T_1,\ldots,T_l\}$ of $\mathcal{N}$ is \emph{$\mathbb{D}$-stable} if no group of players is interested in leaving $\mathcal{T}$ when the players who leave can only form the collections allowed by $\mathbb{D}$.
\end{definition}

Two defection functions are of interest: a $\mathbb{D}_{hp}$ function which associates with each partition $\mathcal{T}$ of $\mathcal{N}$ the group of all partitions of $N$ that can form
by merging or splitting coalitions in $\mathcal{T}$ and the $\mathbb{D}_{c}$ function
which associates with each partition $\mathcal{T}$ of $\mathcal{N}$ the group of all collections in $\mathcal{N}$. Two forms of stability stem from these definitions: $\mathbb{D}_{hp}$ stability and a stronger $\mathbb{D}_{c}$ stability. A partition $\mathcal{T}$ is $\mathbb{D}_{hp}$-stable, if, for the partition $\mathcal{T}$, no coalition has an incentive to split or merge. As an immediate result of the definition of $\mathbb{D}_{hp}$-stability we have that every partition resulting from our proposed CF algorithm is \emph{$\mathbb{D}_{hp}$-stable}. Briefly, a $\mathbb{D}_{hp}$-stable can be thought of as a state of equilibrium where no coalitions have an incentive to pursue coalition formation through merge or split.


With regards to $\mathbb{D}_{c}$ stability,  a $\mathbb{D}_{c}$-stable partition has the following properties \cite{KA01}: (i)- In a $\mathbb{D}_{c}$-stable partition $\mathcal{T}$, no players are interested in leaving $\mathcal{T}$ to form other collections in $\mathcal{N}$ (through any operation), (ii)- If it exists, a $\mathbb{D}_{c}$-stable partition is the
\emph{unique} outcome of any \emph{arbitrary} iteration of merge-and-split and is a $\mathbb{D}_{hp}$-stable partition, and, (iii)- A $\mathbb{D}_{c}$-stable partition $\mathcal{T}$ is a
unique $\rhd$-maximal partition, that is for all partitions $ \mathcal{T}'
\neq \mathcal{T}$ of $\mathcal{N}$, $\mathcal{T} \rhd \mathcal{T}'$. When $\rhd$ is
the Pareto order, the $\mathbb{D}_{c}$-stable
partition presents a \emph{Pareto
optimal} utility distribution.


However, the existence of a $\mathbb{D}_{c}$-stable partition is not always guaranteed \cite{KA01}. The  $\mathbb{D}_{c}$-stable partition $\mathcal{T} = \{T_{1},\ldots,T_{l} \}$ of the whole space $\mathcal{N}$ exists if a partition of $\mathcal{N}$ that verifies the following two necessary and sufficient conditions exists\cite{KA01}:
 \begin{enumerate}[(i)]
  \item For each $i\in \{1,\ldots,l\}$ and each pair of disjoint \emph{coalitions}  $S_1$ and $S_2$ such that $\{S_1 \cup S_2\} \subseteq T_i$, we have $\{S_1 \cup S_2\} \rhd \{S_1,S_2\}$.
       \item For the partition $\mathcal{T}=\{T_1,\ldots,T_l\}$ a coalition $G \subset \mathcal{N}$ formed of players belonging to different $T_i \in \mathcal{T}$ is $\mathcal{T}$-incompatible if for no
$i \in \{1,\ldots,l\}$ we have $G\subset T_i$. $\mathbb{D}_{c}$-stability requires that for  all $\mathcal{T}$-incompatible
coalitions $\{G\}[\mathcal{T}] \rhd \{G\}$ where $\{G\}[\mathcal{T}] = \{G\cap T_i \ \forall \ i\in\{1,\ldots,l\}\}$ is the projection of coalition $G$ on $\mathcal{T}$.
\end{enumerate}
If no partition satisfies these conditions, then no $\mathbb{D}_{c}$-stable partitions of $\mathcal{N}$ exist. Nevertheless, we have
\begin{lemma}
For the proposed $(\mathcal{N},V)$ CSS coalitional game, the proposed CF algorithm converges to the optimal $\mathbb{D}_{c}$-stable partition, if such a partition exists. Otherwise, the final network partition is $\mathbb{D}_{hp}$-stable.
\end{lemma}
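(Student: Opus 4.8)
The plan is to reduce the lemma to the general properties of merge-and-split dynamics and of $\mathbb{D}_{hp}$- and $\mathbb{D}_{c}$-stability recalled above from \cite{KA01}, specialized to the Pareto comparison relation $\rhd$ in (\ref{eq:Pareto}). The proof is structural rather than computational: the CF algorithm is, by construction, an instance of a merge-and-split iteration whose comparison relation is the Pareto order, so the cited results transfer once the fit is verified.

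First I would dispose of the ``otherwise'' clause. The merge-and-split iteration of Phase~2 is guaranteed to terminate (its convergence is established in \cite{KA01}, since each merge or split strictly advances the collection under $\rhd$ and the number of partitions is finite). At the terminal partition, by construction no further merge or split can be applied, for had one been applicable the algorithm would not have halted. This is precisely the defining condition of $\mathbb{D}_{hp}$-stability, namely that no coalition has an incentive to merge or to split. Hence any partition output by the CF algorithm is $\mathbb{D}_{hp}$-stable, which covers the case in which no $\mathbb{D}_{c}$-stable partition exists. This step merely formalizes the remark already made in the text.

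Next I would handle the main claim. The key is property~(ii) of $\mathbb{D}_{c}$-stability: whenever a $\mathbb{D}_{c}$-stable partition of $\mathcal{N}$ exists, it is the \emph{unique} outcome of \emph{any} arbitrary merge-and-split iteration. Since the CF algorithm is exactly such an iteration---coalitions merge or split solely according to the Pareto order $\rhd$---its converged partition must coincide with this unique $\mathbb{D}_{c}$-stable partition, independently of the (arbitrary) order in which the SUs or coalitions act. Finally, invoking property~(iii), this partition is the unique $\rhd$-maximal partition, and because $\rhd$ is the Pareto order, the resulting utility distribution is Pareto optimal, justifying the term \emph{optimal} in the statement.

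The main obstacle is not a calculation but a structural verification: one must check that the NTU game $(\mathcal{N},V)$ with singleton values (\ref{eq:utilmap}) together with the Pareto order genuinely fits the hypotheses under which the results of \cite{KA01} hold, so that properties~(i)--(iii) apply verbatim. The most delicate point is the order-independence asserted by property~(ii). I would argue that, because every merge or split strictly advances the partition in the Pareto order and the $\mathbb{D}_{c}$-stable partition---when it exists---is simultaneously $\rhd$-maximal among \emph{all} partitions of $\mathcal{N}$, no sequence of Pareto-improving moves can terminate anywhere else; thus the arbitrary ordering of merge-and-split operations cannot steer the algorithm away from it, which is exactly the uniqueness and convergence claimed.
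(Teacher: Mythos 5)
Your proof is correct and follows essentially the same route as the paper: the paper's own argument is a one-line appeal to the result in \cite{KA01} that the $\mathbb{D}_{c}$-stable partition, when it exists, is the unique outcome of \emph{any} merge-and-split iteration (of which the CF algorithm is an instance), with the $\mathbb{D}_{hp}$-stability of the terminal partition following immediately from the definition, exactly as in your ``otherwise'' step. Your version simply spells out the termination argument and the structural fit with the Pareto order that the paper leaves implicit.
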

\begin{proof}
The proof is immediate due to the fact that, when it exists, the  $\mathbb{D}_{c}$-stable partition is a unique outcome of any merge-and-split iteration \cite{KA01} such as any partition resulting from our CF algorithm.
\end{proof}

For the proposed game, the existence of the $\mathbb{D}_{c}$-stable partition cannot always be guaranteed although the algorithm would \emph{always} converge to this partition when it exists. Given a partition $\mathcal{T}=\{T_1,\ldots,T_l\}$ of SUs, verifying condition (i) for $\mathbb{D}_c$-stability implies that, for any coalition $T_i \in \mathcal{T}$, the Pareto order as per (\ref{eq:Pareto}) must be verified for the \emph{union} of every pair of sub-coalitions of $T_i$ which, in the context of collaborative sensing, maps directly into a condition dependent on the distances between the SUs inside $T_i$ and their non-cooperative probabilities of miss and false alarm (due to the utility in (\ref{eq:util})). For example, if $T_i$ is a coalition of size $2$, condition (i) implies that Theorem~1 and (\ref{eq:th}) must be verified between the two SUs of $T_i$. Further, if $T_i$ is a coalition of size $3$, condition (i) implies the following:
\begin{enumerate}[(a)]
\item  The distance between every pair of SUs in $T_i$ must satisfy (\ref{eq:th}) and Theorem~1.
 \item The distances and probabilities of miss and of false alarm of the SUs in $T_i$ must satisfy the merge rule  using the Pareto order in (\ref{eq:Pareto}) (in any direction).
 \end{enumerate}

Consequently, simple examples of partitions satisfying condition (i) of  $\mathbb{D}_c$-stability would be a partition composed solely of coalitions of size $2$ which verifies (\ref{eq:th}) and Theorem~1, a  partition $\mathcal{T}$ composed of coalitions of size $3$ that verifies (a) and (b), or a partition composed of a mixture of coalitions of size $2$ or $3$ satisfying the requirements previously stated. Certainly, for larger coalitions, condition (i) imposes restrictions on the distances and probabilities of miss and false alarm of the cooperating SUs analogous to (\ref{eq:th}), (a), and (b).

The difficulty of guaranteeing a $\mathbb{D}_c$-stable partition in the context of collaborative sensing stems from the difficulty of computing a closed form solution that characterizes condition (ii). For a partition $\mathcal{T}$, condition (ii) deals with $\mathcal{T}$-incompatible coalitions, which are coalitions composed of players belonging to different coalitions $T_i \in \mathcal{T}$. In order to clarify this requirement, consider a $3$-player game and a partition $\mathcal{T} = \{\{1,2\},\{3\}\}$. For this case, the $\mathcal{T}$-incompatible coalitions are $G_1=\{2,3\},\ G_2=\{1,3\},\ G_3=\{1,2,3\}$. For every $G_i$, condition (ii) requires that $\{G_i \cap \{1,2\}, G_i\cap \{3\}\} \rhd \{G\}$ which maps into:
\begin{enumerate}[(A)]
\item $\{\{2\},\{3\}\} \rhd \{2,3\}$ which implies that SUs $2$ and $3$ must \emph{not} verify (\ref{eq:th}) as per Theorem~1.
\item $\{\{1\},\{3\}\} \rhd \{1,3\}$ which implies that SUs $1$ and $3$ must \emph{not} verify (\ref{eq:th}) as per Theorem~1.
\item $\{\{1,2\},\{3\}\} \rhd \{1,2,3\}$ which implies that the Pareto order must not be satisfied, i.e., forming coalition $\{1,2,3\}$ decreases the payoff of at least one of the SUs.
\end{enumerate}

Consequently, for the $3$-player example, verifying condition (ii) for the existence of a $\mathbb{D}_c$-stable partition depends, mainly, on the distances between the $3$ SUs and their location with respect to the PU (through the probability of miss) as per Theorem~1. For larger networks, satisfying condition (ii) would, once again, depend on the distances between the SUs of $\mathcal{T}$-incompatible coalitions, however, the number of such coalitions becomes large, and, thus, mapping the requirement into closed form conditions such as in Theorem~1 would be difficult. Note that, for the $3$-player example previously mentioned, in the event where (A)-(C) are satisfied, and, if $\{1,2\}$ satisfies (\ref{eq:th}), then, partition $\mathcal{T} = \{\{1,2\},\{3\}\}$ would verify conditions (i) and (ii), and, thus, it will be a $\mathbb{D}_c$-stable partition and a unique outcome of the proposed algorithm.

In summary, for verifying condition (i) for existence of the  $\mathbb{D}_{c}$-stable partition, the SUs belonging to partitions of each formed coalition must verify the Pareto order as per (\ref{eq:Pareto}) using their utility defined in (\ref{eq:util}). Similarly, for verifying condition (ii) of $\mathbb{D}_{c}$ stability,  SUs belonging to all $\mathcal{T}$-incompatible coalitions in the network must verify the Pareto order. As shown in the previous examples, as well as through Theorem~1 (for $2$ SUs), this existence is closely tied to the location of the SUs and the PU (due to the dependence on the individual miss and false alarm probabilities in the utility expression (6)) which both can be random parameters in practical and large networks. Nonetheless, the proposed algorithm will always guarantee convergence to this optimal  $\mathbb{D}_{c}$-stable partition when it exists as stated in Lemma~1. Whenever a $\mathbb{D}_{c}$-stable partition does not exist, the coalition structure resulting from the proposed algorithm will be $\mathbb{D}_{hp}$-stable as no coalition or SU is able to merge or split any further (can be viewed as a kind of partition \emph{equilibrium}).

\section{Distributed Coalition Formation Algorithm with Probability of Detection Guarantees  (CF-PD)}\label{sec:dcfpd}
The previously proposed CF algorithm in Section~\ref{sec:coalform} allows the SUs in a cognitive network to minimize their probability of miss under false alarm constraints. In such a scenario, one can see the PU rewarding the SUs as much as they improve their detection and reduce their interference level, hence, providing an incentive for the SUs to lower their probability of miss as much as possible. However, in some cognitive networks, in addition to the false alarm constraint that the SUs require, the PU (or PU operator) imposes a desired detection probability $\chi$ that the SUs must meet (i.e., a desired interference level) \cite{CR02}. The PU introduces such constraints in order to control the amount of interference that is allowed by SUs or to limit and control the number of SUs that are allowed to use the spectrum. The presence of such an additional constraint can affect the behavior of the SUs, their incentives, and their collaboration strategies. Hence, the coalition formation process must be adequately modified in order to cope with the presence of such an additional detection probability constraint as the objective of the SUs is no longer to minimize their probability of miss, but rather to achieve the desired probability of miss $\gamma=1-\chi$ (while meeting the false alarm constraint). In this context, during a merge or split process, once a coalition achieves the probability of miss $\gamma$, it will have no further incentive to pursue the coalition formation process. In fact, in the presence of detection probability constraints, once a coalition achieves $\gamma$ increasing the size of the coalition incurs an increased false alarm (within the constraint $\alpha$), as well as extra signalling and sensing time with no extra benefit for the coalition members.

In order to appropriately capture these new objectives in the coalition formation process, we introduce, adjunctly with the proposed $(\mathcal{N},V)$ coalitional game, a \emph{simple voting game}  \cite{Game_theory1} which is defined as a coalitional game with utility function $u: 2^{\mathcal{N}} \rightarrow \{0,1\}$.  By introducing this simple voting game in the coalition formation process, the new objectives of the SUs, i.e., probability of detection guarantees, can be accommodated as will be seen later in this section. For this purpose, we define the \emph{adjunct utility function} $u$ as follows:
\begin{align}\label{eq:utilv}
u(S) = \begin{cases} 1, & \mbox{if }  Q_{d,S}  \ge \chi \textrm{ and } Q_{f,S} \le \alpha, \\ 0, &
\mbox{otherwise}. \end{cases}
\end{align}

This definition captures the two objectives of the SUs: (i) Satisfying the probability of detection $\chi$, and, (ii) Maintaining a false alarm level below $\alpha$, i.e., keeping $Q_{f,S} \le \alpha$. Thus, in (\ref{eq:utilv}), $u(S)=1$ if the probability of detection satisfies $\chi$ and the false alarm constraint is \emph{not} violated. For a coalition $S$ where $Q_{f,S} > \alpha$, as per (\ref{eq:util}), the utility goes towards negative infinity, and, thus a coalition with $Q_{f,S} > \alpha$ can never form, hence, in this case $u(S)=0$. During coalition formation, as will be seen later in this section, the SUs can utilize both utilities $V$ and $u$ in (\ref{eq:utilmap}) and (\ref{eq:utilv}) in order to make their cooperative decisions given the objective of achieving the target probability of detection while making sure that doing so benefits them as per (\ref{eq:utilmap}) while the false alarm constraint is not violated. Further, within the proposed adjunct simple voting game framework, we make the following definition:
 \begin{definition}
In the proposed CSS game, a coalition $S$ is said to be \emph{winning} if $u(S) = 1$ and losing otherwise. Similarly, a player is winning if it is part of a winning coalition and losing otherwise.
\end{definition}

Consequently, under a desired probability of detection constraint, the objective is to maximize the number of winning players while minimizing the cost in terms of false alarm (given the false alarm constraint $\alpha$) as well as the signaling overhead within each formed coalitions. In this regard, the partition stability concepts introduced in Section~\ref{sec:stab} are no longer adequate to assess the performance of the system since the SUs are no longer looking to maximize their utilities but rather to form winning coalitions. Therefore, a more adequate notion for stability is needed through the following concept.

\begin{definition}
In a simple voting game $(\mathcal{N},u)$, a coalition $S$ is said to be \emph{minimal winning} if the defection of \emph{any} player in $S$ renders that coalition losing. In other words, a \emph{minimal winning coalition (MWC)} $S$ is a coalition such that $u(S)=1\ \&\ u(T) = 0, \forall \ T \subset S$.
\end{definition}
From this definition, it follows that, in order to achieve the detection probability $\chi$ with minimum overhead, the SUs have an incentive to form MWCs. Using Definition~8 and (\ref{eq:utilv}), one can see that, whenever a group of SUs form an MWC, they have no incentive to leave this MWC. In fact, when an SU leaves an MWC, as per Definition~8, this SU becomes losing, i.e., its adjunct utility $u$ becomes $0$. Thus, an MWC is highly suitable to characterize the coalitional structure that forms when \emph{both} false alarm and detection probability constraints are required. Moreover, once the SUs form an MWC, they have no further incentive to decrease their probability of miss (which can incur extra false alarm) as they already comply with the requirements set by the PU.

Jointly with the merge and split operations, we define a new operation on the coalitions as follows:
\begin{definition}
\textbf{Adjust Rule -} For any coalition $S \subseteq \mathcal{N}$, the adjust rule is defined as follows:
\begin{align}
C^{S} = \textrm{Adjust}(S) = \begin{cases} S, & \mbox{if }  u(S)  =0,\\ \{S^{\textrm{MWC}},i_{1},\ldots,i_{k}\}, &
\mbox{if } u(S)  =1, \end{cases}
\end{align}
where $S^{\textrm{MWC}}$ is the MWC generated out of a winning coalition $S$ by simply excluding all players $i \in S$ such that $u(S)=u(S\setminus\{i\})=1$ and $\{i_{1},\ldots,i_{k}\}$ is the \emph{collection} of individual SUs $i_{j}$ that were excluded from $S$ by the adjust operation. 
\end{definition}
The exclusion of the players from a winning coalition in order to yield an MWC can be done in any order. For the proposed CSS game, we adopt the following convention without any loss of generality:
\begin{convention}
The adjust operation generates an MWC from a winning coalition $S$ by excluding the players $i \in S$ such that $u(S)=u(S\setminus\{i\})=1$ (note that $u(\emptyset)=0$ by definition of any characteristic function), in a \emph{sequential} manner by an increasing order of their non-cooperative probability of miss $P_{m,i}$. In other words, the SUs verifying $u(S)=u(S\setminus\{i\})=1$ and having the smallest non-cooperative probability of miss are excluded first and their collection constitutes $\{i_{1},\ldots,i_{k}\}$.
\end{convention}
The motivation behind this convention is that, once and if excluded, SUs with a small non-cooperative probability of miss have higher chances (than SUs with larger miss probabilities) to find other partners with whom to form an MWC.

Having defined the necessary new concepts, the coalition formation Phase~2 of the proposed CF algorithm in Table~\ref{tab:alg1} can be easily modified by introducing the adjust operation. In this regard, Phase~2 of the algorithm in Table~\ref{tab:alg1} is split into two phases: Phase~2a and Phase~2b. In Phase~2a, an adjust operation is applied to every coalition in the network partition prior to merge-and-split, which results in a partition $\mathcal{T}_0$ that can be divided into two collections: a collection $\mathcal{W}_0$ with MWCs only and a collection $\mathcal{L}_0$ with the remaining coalitions. In Phase~2b, adaptive coalition formation through merge-and-split is applied only to partition $\mathcal{L}_0$. Nonetheless, in Phase~2b of the algorithm, the adjust operation needs to be further incorporated within merge-and-split. For instance, whenever a coalition forms by merge (or by split), it is subject to an adjust operation. In the event where a formed coalition is winning and the adjust operation results in an MWC, this MWC no longer participates in any future merge or split. However, any SUs excluded from the coalition through the adjust operation, continue to participate in merge-and-split. Consequently, the adaptive coalition formation process terminates by resulting in two collections: A collection $\mathcal{W}_f$ having only MWCs and another collection $\mathcal{L}_f$ with losing coalitions that can no longer be subject to any merge, split or adjust operations. The final network partition is simply $\mathcal{F} = \mathcal{W}_0 \cup \mathcal{W}_f \cup\mathcal{L}_f$. The convergence of the CF-PD algorithm to this final network is guaranteed since the presence of the adjust operation (which is an operation internal to each coalition) in Phase~2b does not affect the convergence of the merge-and-split iterations that was assessed in Section~\ref{sec:coalform} since this convergence is independent of the starting partition. This algorithm is referred to as the distributed coalition formation algorithm with probability of detection guarantees (CF-PD) algorithm and is summarized in Table~\ref{tab:alg2}. Note that, in the final network, the SUs that belong to the collection of losing coalitions $\mathcal{L}_f$ have improved their sensing performance compared to their non-cooperative state, however, they would need to wait environmental changes, such as mobility or the arrival of new SUs, in order to become winning SUs/coalitions, i.e., meet the required probability of detection constraint $\chi$ through cooperation.

\section{Distributed Implementation of CF and CF-PD}\label{sec:compl}
In this section, we discuss and present some of the challenges for the implementation of the proposed CF and CF-PD algorithms in practical cognitive networks.

\subsection{Neighbor Discovery}\label{sec:nd}
In the first phase of the CF and CF-PD algorithms, for discovering their neighbors, each coalition (or single SU) needs to utilize a neighbor discovery techniques in order to discover potential cooperation partners for performing the local merge. In other words, each coalition in the network surveys neighboring coalitions within merge range and attempts to merge according to the Pareto order. For performing merge, the SUs are required to know their own probabilities of miss and their distance to the neighbors in order to assess the potential utility that they can acquire by collaboration which is given by (\ref{eq:util}). Note that, although, in order to assess (\ref{eq:util}), the SUs need to also learn the probability of miss of their neighbors, i.e., the distance between the neighbors and the SU, this information can be sent by the neighbors over a control channel during the negotiation phase of the merge process. Hence, the SUs do not necessarily need to estimate this value by themselves.

In Rayleigh fading, the non-cooperative probability of miss is mainly a function of the average SNR to the PU, and, hence, the distance to the PU as seen in (\ref{eq:missprob}). Consequently, for performing coalition formation, the main information that needs to be gathered by the SUs consists of the distances between these neighbors as well as the distance between the PU and the SU. For finding this information, numerous methods are available for the SUs in practical cognitive networks:
\begin{enumerate}
\item Information received through control channels in cognitive networks such as the recently proposed Cognitive Pilot Channel~(CPC) \cite{CH01,CH03,CH04,CH05,CH06}.
   \item Signal-processing based methods \cite{CR02,NEW10,CS06,ZD00,PO00}.
\item Existing ad hoc routing neighbor discovery techniques applied in cognitive network \cite{ND00,ND01,ZD00,ADHOC00}.
\item Geo-location techniques \cite{CR01,SD00,DB00,DB01,DB02}.
\end{enumerate}
 \begin{table}[!t]
  \centering
  \caption{
    \vspace*{-0.2em}One round of the proposed CF-PD algorithm.}\vspace*{-1em}
    \begin{tabular}{p{8cm}}
      \hline
       \textbf{Initial State} \vspace*{.5em} \\
      \hspace*{1em} The network is partitioned by $\mathcal{T}=\{T_1,\ldots,T_k\}$ (At the beginning of all time $\mathcal{T}$ = $\mathcal{N}$ = $\{1,\ldots,N\}$ with non-cooperative SUs). \vspace*{.1em}\\
\textbf{Three phases in each round of the CF-PD algorithm} \vspace*{.5em}\\
\hspace*{1em}\emph{Phase 1 - Discovery:}   \vspace*{.1em}\\
\hspace*{2em}This phase remains the same as in the CF algorithm of Table~\ref{tab:alg1}.\vspace*{.1em}\\
\hspace*{1em}\emph{Phase 2a - Partition Adjustment:}   \vspace*{.1em}\\
\hspace*{3em}\textbf{Adjust the initial partition $\mathcal{T}=\{T_1,\ldots,T_k\}$}\vspace*{.2em}\\
\hspace*{4em}a) $C^{T_i}$ = Adjust($T_i$), $\ \forall \ T_i \in
\mathcal{T}$\vspace*{.2em}\\
\hspace*{4em}b) $\mathcal{T}_0 = \cup_{i=1}^{k}C^{T_i} $ can be divided into 2 collections\vspace*{.2em}\\
\hspace*{5em}b1) $\mathcal{L}_0=\{L_1,\ldots,L_p\},\ L_i \in \mathcal{T}_0,\ u(L_i)=0$ \vspace*{.2em}\\
\hspace*{5em}b2) $\mathcal{W}_0=\{W_1,\ldots,W_q\},\ W_i \in \mathcal{T}_0,\ u(W_i)=1$ \vspace*{.2em}\\
\hspace*{1em}\emph{Phase 2b - Adaptive Coalition Formation with Probability of} \vspace*{.1em}\\ \hspace*{6em}\emph{Detection Guarantees:}   \vspace*{.1em}\\
\hspace*{3em}\textbf{initialize} $\mathcal{L}_f=\mathcal{L}_0$ \vspace*{.2em}\\
\hspace*{3em}\textbf{repeat}\vspace*{.2em}\\
\hspace*{4em}a) $\{\mathcal{W}_f,\mathcal{L}_f\}$ = MergeAndAdjust($\mathcal{L}_f$); merge algorithm\vspace*{.2em}\\ \hspace*{5em}with adjust as explained in Section~\ref{sec:dcfpd}.\vspace*{.2em}\\
\hspace*{4em}b) $\{\mathcal{W}_f,\mathcal{L}_f\}$ = SplitAndAdjust($\mathcal{L}_f$); split algorithm with \vspace*{.2em}\\
 \hspace*{5em} adjust as explained in Section~\ref{sec:dcfpd}.  \vspace*{.2em}\\
\hspace*{3em}\textbf{until} merge-and-split with adjust terminates.\vspace*{.2em}\\
\hspace*{2em}Final partition is  $\mathcal{F} = \mathcal{W}_0 \cup \mathcal{W}_f \cup\mathcal{L}_f$ \vspace*{.2em}\\
  \hspace*{1em}\emph{Phase 3 - Coalition Sensing:}   \vspace*{.1em}\\
\hspace*{2em}This phase remains the same as in the CF algorithm of Table~\ref{tab:alg1}.\vspace*{.1em}\\
\textbf{Similar to the CF algorithm, Phase~2 (a and b) of CF-PD is repeated periodically to allow the SUs to adapt the network structure to environmental changes such as mobility. } \vspace*{.4em}\\
   \hline
    \end{tabular}\label{tab:alg2}\vspace{-0.6cm}
\end{table}
\subsubsection{Neighbor discovery through control channels}
 In order to maintain a conflict-free environment between the SUs and PUs of a cognitive network, there has been a recent interest in providing, through control channels, assistance to the SUs on both the spectrum sensing and access levels by providing useful information that the SUs can use to improve their performance. For example, in \cite{CH01}, a control channel, called the Common Spectrum Coordination Channel~(CSCC) has been proposed for the announcement of radio and service parameters to allow a coordinated coexistence between SUs and PUs. Following a similar concept, the Cognitive Pilot Channel~(CPC) has been recently proposed \cite{CH03,CH04,CH05,CH06} as a means for providing frequency and geographical information to cognitive users, assisting them in sensing and accessing the spectrum. Notably, the CPC can assist the SUs to discover neighboring PUs and SUs in the presence of multiple access technologies and heterogeneous environments. Thus, as explained in \cite{CH03,CH04,CH05}, the CPC is a control channel that carries information such as available spectrum opportunities, existing PUs or geographical information on neighboring SUs. A CPC channel can be deployed in a cognitive network by either a broadcast manner, i.e., a station that broadcasts the CPC to all of its users, or an on-demand manner, i.e., the CPC information is sent  to the SUs upon request. In this paper,  the CPC can be delivered to the SUs by the receivers or a stand alone spectrum broker which is able to obtain information on the existing SUs, PUs, or other needed CPC details. Following the formation of a coalition, an SU can also signal to its receiver or to the broker its membership in a coalition, which is an information that can be further appended to the CPC.

Consequently, by listening to the CPC transmission the SUs can discover their own probabilities of miss (i.e., their distances to the PU), the presence of neighboring SUs, the distance to other SUs, and, possibly the coalition memberships of these SUs. Since various efficient implementations of the CPC or other information control channels for cognitive networks are ubiquitous, then, using these channels, the SUs can acquire the needed information for forming coalitions through CF or CF-PD with a reasonable overhead.

 \subsubsection{Signal processing-based methods} \label{sub:sig}
 In Rayleigh fading, discovering neighboring nodes and finding an estimate of the distances to these nodes can be achieved through signal processing techniques \cite{PO00} by computing the average received power, from the neighboring SUs and from the PU. To do so, in the proposed model, prior to any coalition formation, i.e., in Phase~1 of the algorithm, each SU can attempt to compute the average received power, while listening to a control channel for example. Another approach would be for the SUs to estimate this average received power during the operation of the network prior to coalition formation. In other words, in a given initial partition $\mathcal{T}_0$, the SUs start by performing spectrum sensing and, then, use the results in order to access the channel. During spectrum access, the SUs can monitor the transmission received from other SUs, and, then, discover existing SUs in the network as well as the distance to these SUs through an estimation, using signal processing-based methods, of the average received SNR from these SUs. This information on the neighbors and the distances to them will be useful for the SUs to take adequate coalition formation decisions in Phase~2 of the algorithm. It must be stressed that the use of received power estimates for neighbor discovery is quite common as shown in \cite{CR02,NEW10,CS06,ZD00,PO00}.

The measured mean received power from the PU (whose transmission is always monitored by the SUs for cognitive radio access), can be used by every SU to find its individual probability of miss. For instance, in this paper, we deal mainly with the probability of miss averaged over the small-scale Rayleigh fading as given in (1), and which is common in many cognitive radio literature, e.g.  \cite{CS00,NEW04,CS01,CS04,CS03,CS02,CS05,CS06} and the references therein. By closely inspecting (1), one would see that, for a given time-bandwidth product $m$ and detection threshold $\lambda$, the probability of miss of an SU mainly depends on the average received SNR from the PU signal, which, is a function of the path loss, and, thus, the distance between the SU and PU. Typically, an SU needs only to average the received power from the PU over several samples to compute this average (with respect to fading) SNR. Therefore, an SU can use the measured mean received power from the PU to estimate its distance to the PU (or the average SNR), and, subsequently, its individual probability of miss. For a coalition of SUs, each member can find its own individual probability of miss and, consequently, since the SUs of a coalition are aware of the location of each other, they can work out the potential probability of miss of their coalition as per (3).

Finally, note that, using signal-processing techniques to discover neighboring SUs, a coalition of SUs $T_i \in \mathcal{T}$ needs only to discover SUs in a small geographical area around it (since otherwise no potential merge is possible as demonstrated for example in Theorem~1 for the 2-SU case) and does not need to know which coalitions each SU belongs to. The information on the membership of the SUs into a coalition can be found out in subsequent pairwise interactions over a control channel between the coalition $T_i$ and its discovered neighboring SUs (since each SU knows which coalition it belongs to).

\subsubsection{Existing ad hoc routing discovery techniques applied in cognitive networks}
In the past decade, techniques for neighbor discovery have been widely spread in the context of ad hoc routing \cite{ADHOC00} whereby numerous methods for finding neighboring nodes and information on these nodes are developed. Recently, these approaches have been leveraged in order to provide neighbor discovery in cognitive radio networks \cite{ND00,ND01,ZD00}.

While an in-depth study of these neighbor discovery techniques is beyond the scope of this paper, we will provide a simple example of such techniques which can assist the SUs for the purpose of coalition formation. For instance, one popular idea for neighbor discovery is the use of methods based on the well known on demand distance vector (AODV) protocol of ad hoc routing \cite{ZD00}. In this protocol, each node periodically broadcasts a ``Hello'' packet, in order to detect its neighbors. In a cognitive network, several methods based on this idea can be used by the SUs to discover the PUs or other SUs. As shown in \cite{ZD00}, to do so, each SU can establish an ad hoc temporary control information channel, over which it broadcasts its ``Hello'' packet which contains the SU's mobile IP address, location information, frequencies of all ad hoc temporary control channels, exact time when SU will tune to ad hoc channels to listen for reception, transmission schedules of ``Hello'' messages,  time stamp according to its internal clock, coalition membership, or other information. Several techniques for transmitting the ``Hello'' packets are given in \cite{ZD00}. Upon receiving the ``Hello'' message, an SU can adjust its own transmission power according to the distance calculated from the information contained
within the broadcasted ``Hello'' message, and then broadcast a ``Hello Ack'' message with its own information. Using this ``Hello'' and ``Hello ACK'' procedure, the SUs can discover each other \cite{ZD00} and, subsequently, engage in a coalition formation process.

Beyond this idea, numerous other techniques for cognitive radio neighbor discovery, using similar concepts from ad hoc routing discovery, can be found in \cite{ND00,ND01} (and references therein) along with their details. By adopting any of these techniques, the neighbor discovery phase of the proposed coalition formation algorithm can be implemented, allowing the SUs to gather any needed information for the merge-and-split phase.

\subsubsection{Geo-location techniques}
 In some cognitive networks, in order to improve the coexistence between the secondary and primary networks, the SUs are required to register and connect, through the Internet, to incumbent databases \cite{SD00,DB00,DB01,DB02}. The principal purpose of this requirement is to provide a mechanism to inform SUs about their neighboring SUs or PUs \cite{SD00}. These databases can hold information on the PUs, the SUs, their positions, their capabilities, and so on. For the purpose of coalition formation, the databases may also hold a field that displays the coalition membership of the SUs, if any.

The information in these databases is entered either automatically by every SU or PU that enters the network, or by a centralized administrator that controls these databases \cite{SD00}. In this context, the need for knowing the position of the nodes (SU or PU) prior to filling the database is required. For fixed SUs or PUs, obtaining this information for filling the databases is straightforward. However, when the SUs or PUs are mobile devices, two cases can be distinguished. In an outdoor environment, the SUs or PUs can use techniques based on the global positioning service (GPS) in order to obtain their own locations and update the databases. If no GPS is available or the nodes are indoor, finding the position becomes more challenging. In this case, the nodes can use advanced positioning techniques such as reference anchors \cite{SD00}, the signal processing techniques of Subsection~\ref{sub:sig}, or other techniques \cite{DB00,DB01,DB02}.

Although the technical details on building the databases is beyond the scope of this paper (the interested reader is referred to \cite{SD00,DB00,DB01,DB02} and the references therein), nonetheless, for coalition formation, the SUs can access the databases in order to discover the position of the PU or their neighboring SUs, or even, the coalition memberships of the neighbors. By doing so, the SUs would obtain all the necessary information for the coalition formation phase of CF or CF-PD.

In summary, neighbor discovery techniques for cognitive radio networks are abundant. For the proposed coalition formation algorithm, any of these techniques can be adopted in Phase~1 as they all allow the SUs to obtain estimates of the needed information for the coalition formation phase. Certainly, there is a tradeoff between accuracy and complexity for each of these techniques, however, for coalition formation, one can adopt any technique with a reasonable tradeoff (e.g., the CPC or the signal processing techniques) as long as it can convey enough information on the neighboring SUs and the PU in the network.

%

Hence, once these parameters are estimated using either the received power estimation, or geo-location techniques, or the CPC, the coalitions (or individual SUs) can then decide on the partners, if any, with whom they can cooperate (merge) for spectrum sensing. Moreover, with regards to the split operation, each formed coalition can internally make the decision to split or not depending on whether its members can find a split form that is preferred by the Pareto order.
\subsection{Coalition Formation Operations: Merge, Split, and Adjust}
 The complexity of the coalition formation phase of the proposed CF algorithm lies mainly in the complexity of the merge and split operations. It is important to stress that the complexity of the algorithm is independent of whether a $\mathbb{D}_c$-stable partition exists or not. For instance, as per Lemma~1 and as shown in \cite{KA01}, any implementation of a merge-and-split algorithm would reach the $\mathbb{D}_c$-stable partition  (if it exists) at no extra complexity).  Hence, reaching an optimal and stable partition (when such a partition exists) does not incur extra complexity. For a given network structure, during one merge process, each coalition attempts to merge with other coalitions in its vicinity in a pairwise manner. In the worst case scenario, every SU, before finding a suitable merge partner, needs to make a merge attempt with all the other SUs in $\mathcal{N}$. In this case, the first SU requires $N-1$ attempts for merge ($N$ being the number of SUs), the second requires $N-2$ attempts and so on. The total worst case number of merge attempts will be $\sum_{i=1}^{N-1}i= \frac{N(N-1)}{2}$. In practice, the merge process requires a significantly \emph{lower} number of attempts since finding a suitable partner does not always require to go through all the merge attempts (once a suitable partner is identified the merge will occur immediately). This complexity is further \emph{reduced} by the fact that a coalition does not need to attempt to merge with far away or large coalitions where the false alarm cost is large relatively to the benefit or where the false alarm is larger than $\alpha$ (Theorem~1 being an example of such merge regions for the two SUs case). Moreover, after the first run of the algorithm, the initial $N$ non-cooperative SUs will self-organize into larger coalitions. Subsequent runs of the algorithms will deal with a network composed of a number of coalitions that is much smaller than $N$; reducing the merge attempts per coalition. Finally, once a group of coalitions merges into a larger coalition, the number of merging possibilities for the remaining SUs will decrease as the merge region becomes smaller (due to cooperation costs). 

For the split rule, in a worst case scenario, splitting  can involve finding all the
possible partitions of the set formed by the SUs in a single coalition. For a given coalition, this number is given by the Bell number which grows exponentially with the number of SUs in the coalition (here the partition search is restricted to each coalition and is not over the whole set $\mathcal{N}$) \cite{BN00}. In practice, this split operation is restricted to the formed coalitions in the network, and, thus, it will be applied on \emph{small} sets. Due to the cost function in (\ref{eq:util}) and (\ref{eq:logbarr}) , the size of each coalition resulting from CSS coalition formation is limited by the increasing false alarm cost as well as by the SUs' locations, thus, the coalitions formed in the proposed CSS game are small. As a result, the split complexity is limited to finding possible partitions for small sets which can be reasonable in terms of complexity. The split complexity is further reduced by the fact that, in most scenarios, a coalition does not need to search for all possible split forms. For instance, once a coalition identifies a split structure verifying the Pareto order, the SUs in this coalition will split, and the search for further split forms is not needed. In summary, the practical aspects dictated by the cognitive network such as the increasing false alarm cost, the errors on the reporting channel, the SUs' locations and the sequential split search will significantly diminish the split complexity. In addition, the complexity of both merge and split is also reduced due to the fact that the algorithm is distributed, and hence all decisions are local to the coalitions.

In CF-PD, the adjust operation can be decided by each coalition internally similar to the merge or split operations as explained in Section~\ref{sec:compl}. Further, for the CF-PD algorithm, typically the merge and split operation will be less complex than for the CF algorithm in Table~\ref{tab:alg1} since winning coalitions do not participate in the merge and split process. However, for CF-PD, another operation comes into place, which is the adjust operation. The adjust operation is \emph{local} to each \emph{winning} coalition $S$ in the network (adjust has no effect on losing coalitions) and its complexity grows linearly in $|S|$ (number of SUs in $S$). Generally, the coalitions that form in the network are small, thus, the adjust rule complexity is very small.

\subsection{Coalition Based Sensing}
Once the coalitions form, the operation of the network within every coalition (Phase~3 in the CF and CF-PD algorithms) is, as briefly prescribed in Sections~\ref{sec:prob} and \ref{sec:mergeandsplit}, systematic. First and foremost, within every coalition each SU prepares a \emph{single} sensing bit based on its local observation, and then delivers this bit to the coalition head through the wireless channel. Subsequently, the coalition head can perform the fusion of the bits and broadcast back the result, in a single bit, to all the coalition members. In this paper, we adopt the common OR-rule for decision fusion (hard decisions) that is widely used in collaborative sensing in cognitive radio \cite{CS00,CS01,CS04,CS03,CS02,CS05,CS06}\footnote{Note that, the proposed coalition formation algorithm can be tailored to accommodate other decision fusion techniques.}. The operation of the network in this phase, i.e., Phase~3, can be used in order to compare the communication overhead for collaborative sensing between our proposed approach with that of the traditional network-wide fusion center solution such as \cite{CS00,CS01,CS04,CS03,CS02,CS05,CS06}. The reason is that, the actual collaborative sensing which takes place in Phase~3 while the coalition formation phase occurs only periodically (seldom in moderately changing environments). For example, in a static network, coalition formation is performed only once, and subsequently, the SUs can perform collaborative sensing at the level of their coalitions. In this regards, in the network-wide fusion center, every single SU in the network needs  to report its sensing bit to the fusion center, which yields up to $N$ bits report where $N$ is the total number of SUs. When the network is large and the SUs are spread out far way from the fusion center, this communication overhead is quite significant. In contrast, in Phase~3 of the proposed algorithm, in order to perform coalition-based collaborative sensing, the SUs belonging to the same coalition, would need to only do a \emph{local exchange} of the sensing bits, at the level of a coalition, which would have a size much smaller than $N$. For instance, as demonstrated in Theorem~2, due to the false alarm constraint, the size of the coalitions resulting from the proposed algorithm is upper bounded. In fact, as will be seen through simulations in Section~\ref{sec:sim}, the size of the resulting coalitions will, in general, be much smaller than $N$.  Thus, clearly, the bandwidth required for the communications of the sensing bits in Phase~3 of our algorithm is quite low, as only single bits are exchanged between coalition members. In fact, within each coalition, the overhead communication required for performing collaborative sensing will be much smaller than $N$. 
Beyond the overhead communication advantage, performing collaborative sensing at the level of coalitions would save power and energy for the SUs since, in general, they would report their bits to a neighboring coalition head. Moreover, the proposed approach would make reporting the sensing bits more reliable since, in each individual formed coalition, the probability of error over the reporting channel as per (\ref{eq:err}) would be much smaller than in the case where all $N$ SUs need to submit their bits to a fusion center. In a nutshell, performing collaborative sensing at the level of coalitions not only reduces the overhead communication but, also, improves reliability and reduces the power consumption (used for sensing) of the SUs.  For multiple access within each coalition (during sensing bit exchange), well known existing MAC protocols can be readily used, such as simple orthogonal channels \cite{CS00,CS01,CS04,CS03,CS02,CS05,CS06}, random access techniques \cite{SD00}, or low temperature handshake solutions such as in \cite{ZH00}. For adapting to environmental changes such as mobility, the SUs perform the first two phases of the proposed CF algorithm in Table~\ref{tab:alg1} \emph{periodically} over time (see Subsection~\ref{sec:mergeandsplit}).

\subsection{Impact of Potential Estimation Errors}
In this paper, as mentioned in Section~\ref{sec:prob}, all the SUs are considered as honest nodes that do not exhibit any malicious or cheating behavior. Hence, only estimation errors (e.g., due to a low received SNR from the PU) may have some impact on the formed coalitions, if, once a coalition forms, the probability of miss of an SU is not as good as envisioned during the coalition formation process. Here, all the cooperation decisions are based on the probability of miss that is averaged over the fading realizations, as per (\ref{eq:missprob}). In this context, the probability of miss in (\ref{eq:missprob}) already captures the possibility of having a low SNR received from the PU, which, in turn, maps into an estimation error. Therefore, the SUs that are more apt to making estimation errors, i.e., SUs far away from the PU, would exhibit a high non-cooperative probability of miss as per (\ref{eq:missprob}). In consequence, whenever such SUs report their probabilities of miss to potential merge candidates, these candidates would be aware of the high probability of miss and would use it in order to find out whether the cooperative performance, i.e., as per (\ref{eq:missclu}), would satisfy the Pareto order or not. Hence, the estimation errors due to low SNR, as obtained by (\ref{eq:missprob}), are handled adequately through the Pareto order merge or split decisions that are based on the utility function in  (\ref{eq:util}). Therefore, these errors do not perturb the convergence of the merge-and-split algorithm.

Nonetheless, following the convergence of merge-and-split, consider a group of SUs that have merged together into a coalition $S$. This coalition, in Phase~3 of the algorithm, would start acting cooperatively and performing distributed collaborative sensing. During this phase, the SUs in $S$ can monitor their overall performance as opposed to the expected performance when the merge agreement has occurred. In this case, if it turns out that, when the SUs in $S$ operate cooperatively, the actual probability of miss achieved by one of the SUs in $S$ is, due to estimation errors, much higher than the value potentially envisioned during coalition formation (e.g., this can be estimated through the number of collisions that the SUs in this coalition had with the PU when they start accessing the spectrum), then this SU can be excluded by the coalition. The procedure for excluding these SUs is out of the scope of the present paper; however, one can use approaches similar to the ones used for identifying malicious SUs (e.g., onion peeling) such as in \cite{HL00,HL01}.

\section{Simulation Results and Analysis}\label{sec:sim}
For simulations, the following network is set up: The PU is
placed at the center of a $3$~km $\times 3$~km square area with the SUs randomly deployed in the area around the PU. We set the PU transmit power $P_{\textrm{PU}}=100$~mW, the SU transmit power for reporting the sensing bits $P_i=10$~mW, $\forall i \in \mathcal{N}$ and the noise $\sigma^2=-90$~dBm. For path loss, we set $\mu=3$ and $\kappa = 1$. The maximum false alarm constraint is set to $\alpha = 0.1$, as recommended by the IEEE 802.22 standard \cite{RE00}.  In practical cognitive networks, it is desirable to reduce the access delays and the bandwidth needed for sensing \cite{CR02,NEW04}, hence, the  the time bandwidth product $m$ is generally chosen to be small \cite{CR02,NEW04,CS00,CS01,CS03,CS04}. In the simulations, unless stated otherwise, we select $m=5$ which is a common value used widely in all literature \cite{CS00,NEW04,CS01,CS03,CS04}. In this section, all statistical results (when applicable) are averaged over $5000$ random locations of the SUs as well as a range of energy detection thresholds $\lambda$ that do not violate the false alarm constraint; this in turn, maps into an average over the non-cooperative false alarm range $P_f \le \alpha$ (obviously, for $P_f > \alpha$ no cooperation is possible).

\begin{figure}[!t]
\begin{center}
\includegraphics[angle=0,width=90mm]{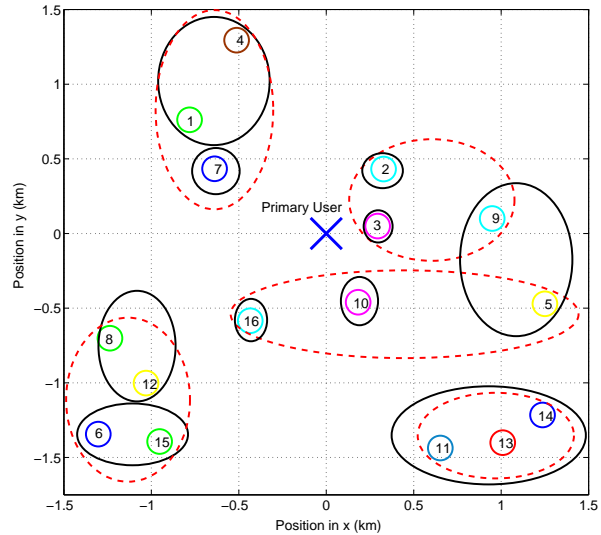}
\end{center}\vspace{-0.5cm}
\caption {A snapshot of the coalitional structure resulting from both CF (in dashed line) and CF-PD (solid line) algorithms for $N=16$ SUs, $P_f=0.01$, and a target detection probability  $\chi=95\%$ for CF-PD.} \label{fig:snapshot}\vspace{-0.4cm}
\end{figure}

In Fig.~\ref{fig:snapshot}, we show a snapshot of the network structure resulting from the proposed CF  (dashed line) and CF-PD (solid line) algorithms for $N=16$ randomly placed SUs, a non-cooperative false alarm $P_f = 0.01$, and  a target detection probability $\chi=95\%$ for CF-PD. We note that, while the CF-PD yields $10$ coalitions which comprise $5$~SUs that decided to remain non-cooperative, the CF algorithm yields only $5$ coalitions with no SUs remaining non-cooperative. Due to their proximity to the PU, the non-cooperative detection probabilities of SUs $2,3,7,10,$ and $16$ are, respectively, $98.8\%,99.8\%,96.6\%,99.1\%$, and $97.2\%$, which satisfy the required target threshold $\chi=95\%$ for these SUs. Thus, in CF-PD, SUs $2,3,7,10$ and $16$ form a singleton MWC each and have no incentive to cooperate as they already satisfy the desired performance non-cooperatively. Moreover, coalitions $\{1,4\}$, $\{8,12\}$, and $\{6,15\}$ are MWCs, and, thus, do not cooperate any further during CF-PD. In contrast, when the CF algorithm is applied, the SUs seek to minimize their probability of miss and, thus, SUs $2$ and $3$ form a coalition with SU $9$, SUs $10$ and $16$ form a coalition with SU $5$, while SU $7$ minimizes its probability of miss by merging with coalition $\{1,4\}$. Further, for CF, SUs $6,8,12$, and $15$ minimize their probability of miss by forming a single coalition $\{6,8,12,15\}$. Finally, we note that for both CF and CF-PD coalition $\{11,13,14\}$ forms, and this coalition is a losing coalition since the achieved probability of detection by its member SUs is only $89.5\%$ (due to the location of its members). Moreover, this coalition is unable to find any suitable partners that can help it improve its utility further in CF, or meet the target detection threshold in CF-PD. In a nutshell, Fig.~\ref{fig:snapshot} illustrates how, in a practical cognitive network, different coalitions can emerge depending on the incentives and cooperative objectives of the SUs.

\begin{figure}[!t]
\begin{center}
\includegraphics[angle=0,width=80mm]{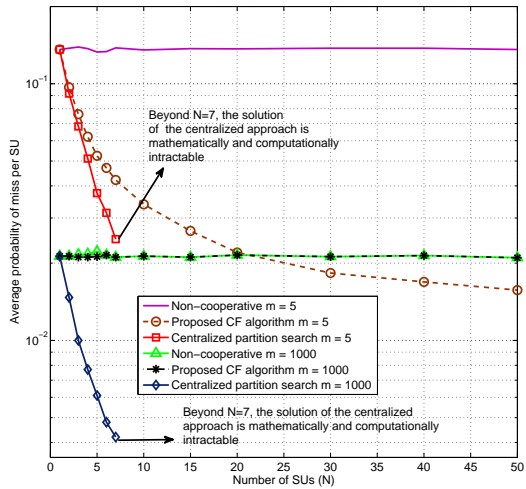}
\end{center}\vspace{-0.7cm}
\caption {Average probability of miss versus number of SUs $N$ for $m=5$ and $m=1000$.} \label{fig:perf}\vspace{-0.5cm}
\end{figure}
Figs.~\ref{fig:perf} and \ref{fig:falarm} show, respectively, the average probabilities of miss and the average false alarm probabilities achieved per SU for different network sizes and for a typical time-bandwidth product of $m=5$ and a large time-bandwidth product of $m=1000$. In Fig.~\ref{fig:perf}, we show that, for $m=5$, the proposed CF algorithm yields a significant advantage in the average probability of miss reaching up to $88.45\%$ reduction (at $N=50$) relative to the non-cooperative case which is an order of magnitude of improvement. This advantage is increasing with the network size $N$. In this figure, we also show the results of a centralized exhaustive search solution which minimizes the average probability of miss per SU, subject to the false alarm constraint $\alpha$. This solution is shown for up to $N=7$ since it is mathematically intractable for larger networks.

We note that, for all $m$, a gap exists between the performance of the CF algorithm and that of the centralized solution. This gap stems mainly from the individual choices of the SUs when they act in a distributed manner, i.e., selecting their partners based on a balance between gains and costs as opposed to a centralized approach that does not capture these individual incentives. For CF, these incentives are captured by the cost function in (\ref{eq:logbarr}) which increases drastically when the false alarm probability is in the vicinity of $\alpha$. This increased cost makes it generally non-beneficial for coalitions with high false alarm levels, e.g., false alarm values close to $\alpha$, to collaborate in the distributed CF approach as they require a large probability of miss improvement to compensate the cost in their utility (\ref{eq:util}). As a result, even though the CF algorithm yields a performance gap in terms of miss probability for all $m$, the individual decisions of the SUs force a false alarm for the distributed case smaller than that of the centralized solution as seen in Fig.~\ref{fig:falarm} at all $m$.  In other words, when an SU (or coalition) is taking its own decision, it needs to obtain a gain, in terms of probability of miss, that is relatively larger than the increase in the false alarm (relative to the constraint $\alpha$) as captured by the utility and cost function in (\ref{eq:util}). In contrast, in the centralized approach, the optimization problem stated in Section~\ref{sec:prob} does not involve the individual utilities or decisions of the SUs, instead it optimizes the overall average probability of miss given a constraint on the overall false alarm probability. Consequently, the centralized approach would definitely yield a smaller average probability of miss, but it also yields a false alarm level that is much closer to the constraint $\alpha$, i.e., a false alarm level that is higher than in the distributed approach. This result is corroborated in Fig.~\ref{fig:falarm} where, at all $m$, the achieved average false alarm by the proposed distributed solution outperforms that of the centralized solution, however, it is still outperformed by the non-cooperative case. 

In addition, even though a small $m$ is typical in cognitive networks \cite{CS00,NEW04,CS01,CS03,CS04}, a large time-bandwidth product may be used in some instances when detection at low SNR is required. By increasing the observation time (or the bandwidth used for sensing), the SUs can improve their non-cooperative average probability of miss non-cooperatively as shown in Fig.~\ref{fig:perf} for $m=1000$. However, this gain comes at the expense of a long delay for spectrum access or a wasted bandwidth. At $m=1000$, although for the centralized approach a performance improvement in terms of probability of miss is possible, we note that for the proposed distributed CF algorithm no coalitions are formed and the performance is comparable to the non-cooperative case. This is mainly due to the fact that, at $m=1000$, the average non-cooperative probability of miss of all SUs is already quite small (around $2\%$), and, hence, when the SUs act in a distributed manner, they find little improvement in their utility (\ref{eq:util}) through cooperation, since the gains in probability of miss will be accompanied with a large increase in the false alarm as shown in Fig.~\ref{fig:falarm} for the centralized approach at $m=1000$. However, by comparing the results at $m=5$ and $m=1000$, Fig.~\ref{fig:perf} shows that, by using our proposed CF algorithm, the SUs can improve their detection significantly without the need for a wasted bandwidth or long access delays. For example, for a network of $N=50$ SUs at $m=5$, the SUs can achieve a probability of detection which is comparable (and slightly outperforms) the non-cooperative case at $m=1000$. Hence, the comparison between the CF results at $m=5$ and the non-cooperative results at $m=1000$ in Fig.~\ref{fig:perf} demonstrates that, instead of increasing the sensing observation time or bandwidth (notably when detection at low SNR is needed), the SUs can utilize the proposed CF algorithm to improve their detection performance while saving bandwidth and reducing their spectrum access delay.


\begin{figure}[!t]
\begin{center}
\includegraphics[angle=0,width=85mm]{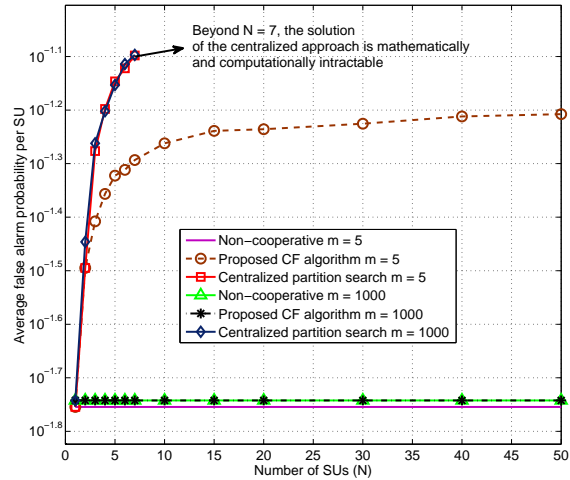}
\end{center}\vspace{-0.5cm}
\caption {Average false alarm probability versus number of SUs $N$ for $m=5$.} \label{fig:falarm}\vspace{-0.6cm}
\end{figure}

In Fig.~\ref{fig:perffal}, we show the average probabilities of miss per SU for different thresholds $\lambda$ (range of non-cooperative false alarm $P_f \le \alpha$) for $N=7$. Fig.~\ref{fig:perffal} shows that, as $P_f$ decreases, the performance advantage of CSS for both the centralized and distributed CF solutions increases (except for very small $P_f$ where the advantage in terms of miss probability reaches its maximum). The performance gap between centralized and distributed is once again compensated by a false alarm advantage for the distributed CF solution as already seen and explained in Fig.~\ref{fig:falarm} for $N=7$. Finally, Fig.~\ref{fig:perffal} shows that as $P_f$ goes to $\alpha = 0.1$, the CSS advantage diminishes as the network tends to the non-cooperative case.


In Fig.~\ref{fig:mergef} we investigate, through Theorem~\ref{th:one} (with $P_f =0.01$), the merge possibilities between two secondary users $\textrm{SU}_1$ and $\textrm{SU}_2$, with $P_{m,1} \le P_{m,2}$. The results are shown for different locations of $\textrm{SU}_1$ when $\textrm{SU}_2$ belongs to circles centered at the PU and with different radii (each circle radius represents a class of non-cooperative probability of miss). Fig.~\ref{fig:mergef} shows the angle, in degrees, of the sector centered at $\textrm{SU}_1$ and which intercepts the merge region arc at the various circles where $\textrm{SU}_2$ belongs (obviously a larger angle intercepts a larger merge region arc). The angle is obtained by a simple geometrical computation after finding the minimum distance $\tilde{d}_{1,2}$ between $\textrm{SU}_1$ and $\textrm{SU}_2$ using Theorem~1. First and foremost, Fig.~\ref{fig:mergef} shows that the approximate solution obtained by Theorem~1 is almost the same as the exact numerical (graphical) solution, which corroborates that the approximation in Theorem~1 for the distances is tight. Moreover, Fig.~\ref{fig:mergef} clearly shows that, for the different $\textrm{SU}_1$ locations, the merge region is larger as the radius
\begin{figure}[!t]
\begin{center}
\includegraphics[angle=0,width=80mm]{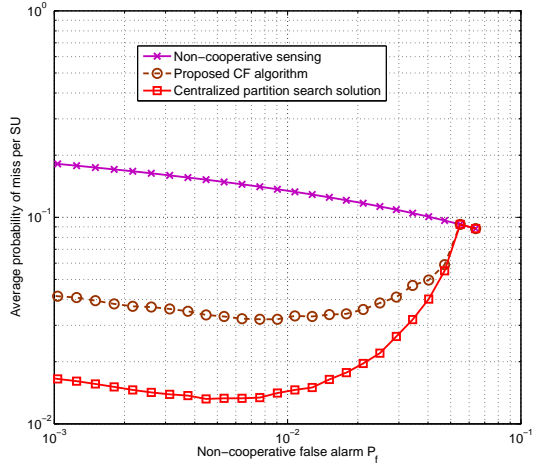}
\end{center}\vspace{-0.5cm}
\caption {Average probabilities of miss per SU versus non-cooperative false alarm $P_f$ (or energy detection threshold $\lambda$) for a network of $N=7$ SUs with $m=5$.} \label{fig:perffal}\vspace{-0.6cm}
\end{figure}
\begin{figure}[!t]
\begin{center}
\includegraphics[angle=0,width=80mm]{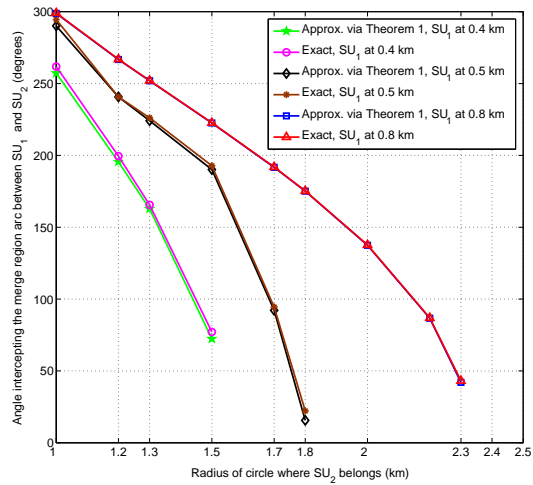}
\end{center}\vspace{-0.5cm}
\caption {Approximate (via Theorem~\ref{th:one}) and exact (graphical solution for inequality) angle centered at $\textrm{SU}_1$ and intercepting the merge region arc (for forming a coalition of size $2$) between $\textrm{SU}_1$ and $\textrm{SU}_2$ ($P_{m,1} \le P_{m,2}$) when $\textrm{SU}_2$ belongs to circles centered at the PU and having different radii (each circle radius represents a different non-cooperative probability of miss) and for different locations of $\textrm{SU}_1$.} \label{fig:mergef}\vspace{-0.6cm}
\end{figure}
of $\textrm{SU}_2$'s circle is smaller (i.e. error on the reporting channel is smaller). Further, we note that as $\textrm{SU}_1$ becomes closer to the PU, the number of circles where $\textrm{SU}_2$ can belong and on which a merge region exists becomes smaller. For instance, when $\textrm{SU}_1$ is at $0.8$~km, it can merge with $\textrm{SU}_2$ moving on circles of radii up to $2.3$~km. In other words, when $\textrm{SU}_2$ belongs to a circle with a radius $1.5$~km larger than the radius of the circle to which $\textrm{SU}_1$ belong, the coalition cannot form. However, when $\textrm{SU}_1$ is at $0.4$~km it can merge with $\textrm{SU}_2$ belonging to circles with radii up to $1.5$~km, hence, only around $1.1$~km larger than the radius of the circle to which $\textrm{SU}_1$ belong. This result shows that, when $\textrm{SU}_1$ has a smaller probability of miss (i.e. better non-cooperative performance), the collaboration possibilities with distant SUs become smaller, since $\textrm{SU}_1$ is already satisfied with its non-cooperative probability of miss and does not have an incentive to collaborate with very distant SUs. In summary, Fig.~\ref{fig:mergef} provides interesting insights on how two SUs can merge, depending on their location with respect to the PU.

Fig.~\ref{fig:mobconv} shows how the structure of the cognitive network with $N=50$ SUs evolves and self-adapts, while all the SUs use a basic random walk mobility model whereby the nodes move at a constant speed of of $120$~km/h in a random direction uniformly distributed between $0$ and $2\pi$, over periods of $5$~seconds for a total duration of $5$~minutes. The proposed CF algorithm is repeated periodically by the SUs every $\theta=5$ seconds, in order to provide self-adaptation to mobility. As the SUs move, the structure of the network changes, with new coalitions forming and others splitting. The network starts with a non-cooperative structure made up of $50$ independent SUs. In the first step, the network self-organizes
in $18$ coalitions with an average of $2.77$ SUs per coalition. As time evolves, the SUs self-organize the structure which is changing as new coalitions form or others split. After the $5$ minutes have elapsed the final structure is made up of $20$ coalitions with an average of $2.5$ SUs per coalition.

\begin{figure}[!t]
\begin{center}
\includegraphics[angle=0,width=80mm]{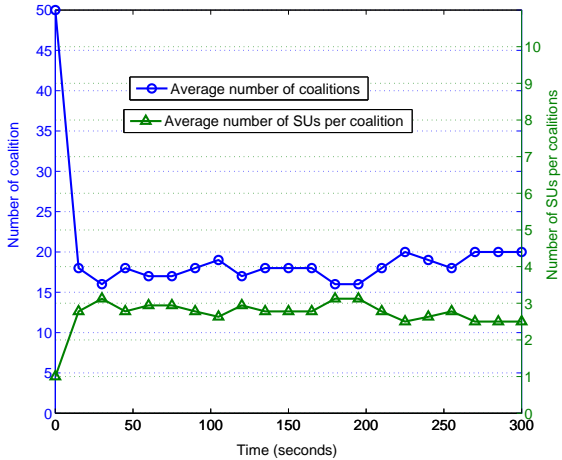}
\end{center}\vspace{-0.5cm}
\caption {Evolution of the network's structure over time using CF for a network of $N=50$~SUs, a constant speed of $120$~km/h and a non-cooperative false alarm $P_f=0.01$.} \label{fig:mobconv}\vspace{-0.4cm}
\end{figure}
In Fig.~\ref{fig:win}, we evaluate the performance of the CF-PD algorithm of Table~\ref{tab:alg2} by showing the average percentage of winning SUs (average over locations of SUs and non-cooperative false alarm range $P_f \le \alpha$ ), i.e., SUs achieving a detection probability $\chi=95\%$ imposed by the PU (the choice of $\chi=95\%$ conforms with the recommendation of the IEEE 802.22 standard \cite{RE00}) as a function of
the network size $N$.  We compare the performance of CF-PD to that of the non-cooperative case  as well as the optimal partition that maximizes the number of winning SUs and which is found by a centralized search over all partitions formed of MWCs. For CF-PD, the percentage of winning SUs increases with the number of SUs since the possibility of finding cooperating
partners increases. In contrast, the non-cooperative approach presents an almost constant performance with different network sizes. Cooperation presents a significant
advantage over the non-cooperative case in terms of average percentage of winning SUs, and this advantage increases with the network size approximately tripling the number of winning SUs (relative to the non-cooperative case) at $N=50$.  Furthermore, compared to the optimal solution, the proposed CF-PD algorithm achieves a highly comparable performance with a loss not exceeding $3.7$ percentage points at $N=7$. This shows that, by using the proposed CF-PD algorithm, the network can achieve a performance that is close to optimal. Note that, for networks larger than $N=7$ SUs, finding the optimal partition through exhaustive search is mathematically and computationally intractable. However, as the network size increases CF-PD presents a portion of winning SUs that is close to $100\%$, reaching up to $87.25\%$ at $N=50$, hence, as the network grows, the performance gap relatively to the optimal case will not increase much. Finally, in Fig.~\ref{fig:win}, we note that for the proposed algorithm, the non-cooperative case, and the optimal solution, some SUs (or coalitions)  are still losing, i.e., unable to meet the $\chi=95\%$ constraint. This is mainly due to the current locations and  channels of these SUs. In order for these SUs to meet the desired performance level, they need to wait for environmental changes, such as their own mobility (moving to a better location) or the arrival of new SUs that can boost their performance.
\begin{figure}[!t]
\begin{center}
\includegraphics[angle=0,width=80mm]{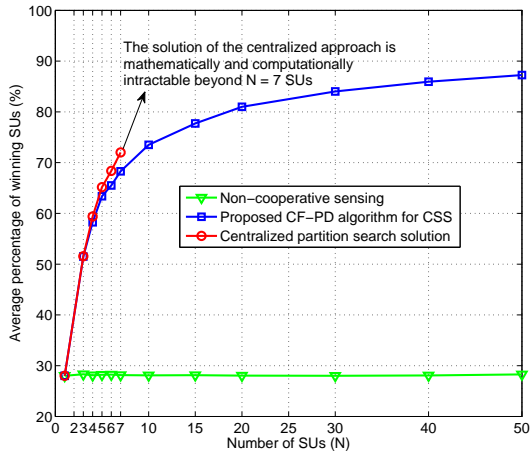}
\end{center}\vspace{-0.5cm}
\caption {Average percentage of winning SUs achieving $\chi=95\%$ detection probability versus number of SUs $N$.} \label{fig:win}\vspace{-0.4cm}
\end{figure}

In Fig.~\ref{fig:wingamm}, we show the average percentage of winning SUs (average over locations of SUs and non-cooperative false alarm range $P_f \le \alpha$ ) achieved by CF-PD for $N=50$~SUs as the desired probability of miss $\gamma=1-\chi$ increases (i.e. desired detection probability decreases) within the range recommended by the IEEE 802.22 \cite{RE00}. As $\gamma$ increases, the number of winning SUs increases for both the CF-PD and the non-cooperative case, as it becomes easier to guarantee this required probability. Fig.~\ref{fig:wingamm} shows that the proposed CF-PD algorithm guarantees that at least $74.1\%$ of the SUs (at $\gamma=0.01$, the most stringent constraint) achieve the desired detection probability. Further, at all $\gamma$, CF-PD presents a significant advantage over the non-cooperative case which  increases as the desired detection probability becomes stricter. For instance, at $\gamma=0.01$ (i.e. $\chi=99\%$), CF-PD  allows around eight times more SUs (relative to the non-cooperative case) to achieve the required probability of detection.
\begin{figure}[!t]
\begin{center}
\includegraphics[angle=0,width=80mm]{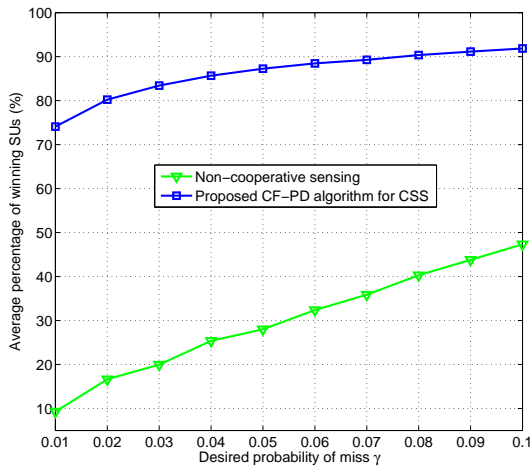}
\end{center}\vspace{-0.5cm}
\caption {Average percentage of winning SUs versus the desired probability of miss $\gamma=1-\chi$ for the proposed CF-PD algorithm for $N=50$~SUs.} \label{fig:wingamm}\vspace{-0.4cm}
\end{figure}
In Fig.~\ref{fig:numus}, for a network of $N=50$ SUs, we evaluate the sizes of the coalitions resulting from CF and CF-PD (with $\chi=95\%$) and compare them with the the upper bound $M_{\max}$ derived in Theorem~2. First, as the non-cooperative $P_f$ increases, for CF and CF-PD, both the maximum and the average size of the formed coalitions decrease converging towards the non-cooperative case as $P_f$ reaches the constraint $\alpha = 0.1$. Through this result, we can clearly see the limitations that the detection-false alarm probabilities trade off for CSS imposes on the coalition size and network structure. Also, in Fig.~\ref{fig:numus}, we show that, albeit the upper bound on coalition size $M_{\max}$ increases drastically as $P_f$ becomes smaller, the average maximum coalition size achieved by the proposed algorithms does not exceed $5$ SUs per coalition for CF and $3$ SUs per coalition for CF-PD. This result shows that the network structure is composed of a large number of small coalitions rather than a small number of large coalitions, especially for CF-PD, even when $P_f$ is small and the collaboration possibilities are high. In this context, for CF-PD, the average
and average maximum coalition sizes are smaller than for CF, since in CF-PD, once a coalition achieves the detection probability $\chi$, no further collaboration is required.

In Fig.~\ref{fig:speed}, for CF and CF-PD (with $\chi=95\%$) we show, over a period of $10$ minutes, the frequency in terms of merge-and-split and adjust operations per minute for various speeds of the SUs in a mobile cognitive network with $N=50$~SUs and $P_f=0.01$. As the velocity increases,  the frequency of all operations
increases for both CF and CF-PD due to the changes in the network structure incurred by
mobility. The CF algorithm incurs a higher frequency of merge-and-split than CF-PD since, in the CF, the SUs seek to minimize their probabilities of miss with no constraints. In contrast, the CF-PD presents a significantly smaller frequency of merge-and-split operations, since the SUs, once they form a winning coalition, are no longer interested in participating in the merge-and-split operations. However, the CF-PD complements this lower number of merge-and-split operations with a number of adjust operations, needed in order to form MWCs out of winning coalitions. Finally, the frequency of all operations (merge, split and adjust) for CF-PD is comparable to the frequency of merge-and-split in CF at almost all velocities. However, CF-PD possesses a lower complexity as the adjust rule has a complexity smaller than merge or split.

\begin{figure}[!t]
\begin{center}
\includegraphics[angle=0,width=80mm]{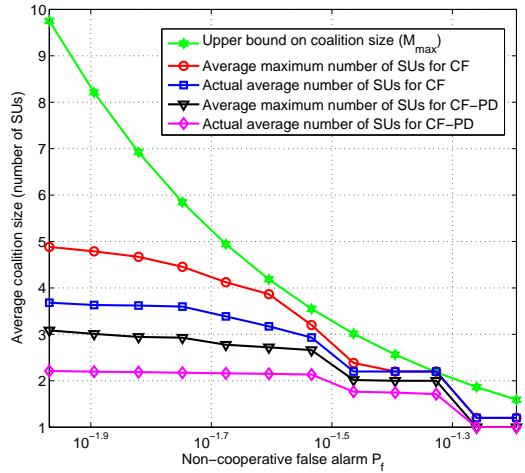}
\end{center}\vspace{-0.5cm}
\caption {Maximum and average coalition size versus non-cooperative false alarm $P_f$ (or energy detection threshold $\lambda$) for CF and CF-PD for $N=50$~SUs.} \label{fig:numus}\vspace{-0.4cm}
\end{figure}
\begin{figure}[!t]
\begin{center}
\includegraphics[angle=0,width=80mm]{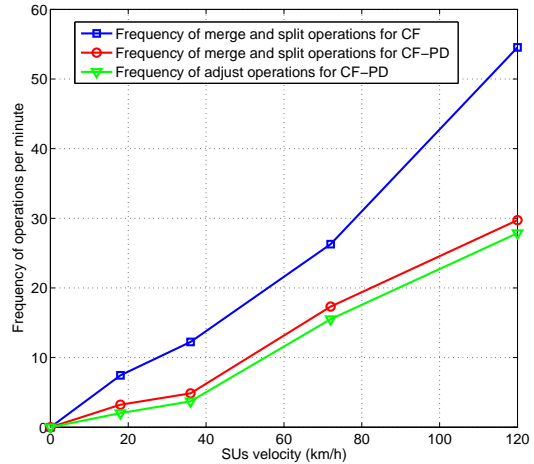}
\end{center}\vspace{-0.5cm}
\caption {Frequency of merge-and-split operations per minute (for CF and CF-PD) and adjust operations per minute (for CF-PD) achieved over a period of $10$ minutes for different velocities for $N=50$~SUs and $P_f=0.01$.}\vspace{-0.4cm} \label{fig:speed}
\end{figure}

\section{Conclusions}\label{sec:conc}
 This paper introduced a novel and distributed model for performing collaborative sensing in cognitive radio networks. In the proposed model, a network of SUs can interact and form cooperating coalitions for improving their spectrum sensing performance. We modeled the problem as a coalitional game with non-transferable utility and derived distributed algorithms for coalition formation. First, we proposed a distributed coalition formation  (CF) algorithm based on two rules of merge and split that enable SUs in a cognitive network to collaborate and maximize the probability of detecting the PU presence while accounting for the costs in terms of false alarm probability. We characterized the network structure resulting from the proposed algorithm as well as studied and proved its various properties. We further complemented the proposed algorithm with an adjunct coalitional voting game for providing distributed coalition formation with detection probability guarantees (CF-PD) when such detection probabilities are imposed by the PU. Simulation results showed that the CF algorithm reduces the average probability of miss per SU up to $88.45\%$ compared to the non-cooperative case while the CF-PD algorithm enables up to $87.25\%$ of the SUs to achieve the required detection probability while forming minimal winning coalitions. The results also showed how, through the proposed algorithms, the SUs can self-organize and adapt the network structure to environmental changes such as mobility.
\appendix
\subsection{Proof of Theorem 1}
Consider two SUs having non-cooperative probabilities of miss $P_{m,i}$ and $P_{m,j}$, respectively, such that $P_{m,i} \le P_{m,j}$. In this case, if and when coalition $S=\{i,j\}$ forms, then SU $i$ is the \emph{coalition head} as per Convention 1 and we have
\begin{align}
Q_{m,S} \! =\! P_{m,i}[ P_{m,j}(1-P_{e,j,i})\!  +\!  (1- P_{m,j})P_{e,j,i}],\\
Q_{f,S} \! =\!  1 - \left[\displaystyle (1-P_f)[ (1-P_{f})(1-P_{e,j,i}) \! + \! P_{f}P_{e,j,i}]\right].
\end{align}
By virtue of Property~\ref{prop:equ}, the payoffs of the SUs $i$ and $j$ if $S$ forms are given by, $\phi_i(S) = \phi_j(S) = v(S)$. In order for the merge between SUs $i$ and $j$ to occur (for forming $S$) we must have $S \rhd \{\{i\},\{j\}\}$. Using the Pareto order, and given that $v(\{i\}) \ge v(\{j\})$, $S \rhd \{\{i\},\{j\}\}$ translates into
$\phi_i(S) = v(S) > v(\{i\}) \ge v(\{j\})$.
After algebraic manipulations and using the expressions of $v(\{i\})$ and $v(S)$ in (\ref{eq:util}), we obtain
\begin{align}\label{eq:trans}
\frac{D\cdot P_{e,j,i} +C}{\alpha^2} \le \log{\left(1-\left(\frac{A\cdot P_{e,j,i} + B}{\alpha}\right)^2\right)},
\end{align}
with $P_{e,j,i}$ the probability of error between the two SUs, and $A\!=\!(2P_f -1)(P_f - 1)$, $B\!\!=\!\! P_f(2-P_f)$, $C\!\!=\!\!P_{m,i}(P_{m,j}-1)+\left(\alpha^2\cdot\log{\left(1-\left(\frac{P_f}{\alpha}\right)^2\right)}\right)$ and $D\! =\!P_{m,i}(1-2P_{m,j})$.

The inequality in (\ref{eq:trans}) is a transcendental inequality that admits only a numerical (graphical) solution, with no analytical expression \cite{LZ00}. The numerical solution (by solution we imply the range of $P_{e,j,i}$ that satisfy the inequality) can be derived by plotting both sides of the equation as a function of $P_{e,j,i}$. A sufficient approximate solution (i.e. approximate range of $P_{e,j,i}$ that satisfy the inequality) for (\ref{eq:trans}) is found by solving

\begin{align}\label{eq:trans2}
\frac{D\cdot \bar{P}_{e,j,i} +C}{\alpha^2} \le \log{\left(1-\left(\frac{A\cdot P_{e,j,i} + B}{\alpha}\right)^2\right)},
\end{align}
\noindent where $\bar{P}_{e,j,i}$ is the value that maximizes $\frac{D\cdot P_{e,j,i} +C}{\alpha^2} $ (since the probability of error is bounded, this maximum is finite) and which directly depends on the sign of $D$. By algebraic manipulation, (\ref{eq:trans2}) yields a quadratic inequality in $P_{e,j,i}$
\begin{align}\label{eq:fina2}
A^2P_{e,j,i}^2 + (2AB)P_{e,j,i}+ B^2 + \alpha^2(e^{(\frac{C+\bar{P}_{e,j,i} D}{\alpha^2})}-1) \le 0,
\end{align}
with discriminant $\Delta = A^2\alpha^2(1-e^{(\frac{C+\bar{P}_{e,j,i}D}{\alpha^2})})$. We distinguish two cases depending on the sign of $D\! =\!P_{m,i}(1-2P_{m,j})$.

\textbf{Case 1}:  $D\ge 0 $, that is $P_{m,j} \le \frac{1}{2}$. First, we note that: (i) $P_f$ which depends solely on the detection threshold $\lambda$ is generally chosen relatively small compared to $1$ (a typical value is $0.01$), and (ii) the inequality (\ref{eq:trans}) admits a solution if and only if $A\cdot P_{e,j,i} + B < \alpha$ (otherwise the right hand side of (\ref{eq:trans}) goes to $-\infty$). A direct result of (ii) is that any solution $\tilde{P}_{e,j,i}$ for (\ref{eq:trans}) cannot exceed $\frac{\alpha - B}{A}$. By using (i), we have that $A \approx 1$ and $B \approx 0$, hence, any solution $\tilde{P}_{e,j,i}$ can be approximately bounded by $\tilde{P}_{e,j,i} \le \alpha$, and thus, the probability of error $\bar{P}_{e,j,i} = \alpha$. In this case, $\Delta$ is always positive. After manipulation $(1-e^{(\frac{C+\alpha D}{\alpha^2})}) > 0$ yields
\setlength{\aligntop}{-0.8em}
\setlength{\alignbot}{-1.25\baselineskip}
\begin{align}
P_{m,i}(1\!-\!P_{m,j})\! +\! P_{m,i}(2P_{m,j}-1)\!  >\! \alpha^2\log{(\!1\!-\!\frac{P_f^2}{\alpha^2})}.
\end{align}
\setlength{\aligntop}{-0.1em}
\setlength{\alignbot}{-0.7\baselineskip}
\noindent The RHS is always negative (barrier function \cite{BO00}). The LHS can be shown to be always positive since otherwise, we need $P_{m,i}(1-P_{m,j}) + \alpha P_{m,i}(2P_{m,j}-1) <0$ which yields two cases
$P_{m_j} > \frac{\alpha - 1}{2\alpha - 1} \ge 1\ \textrm{if } \alpha < 0.5$ or
$P_{m_j} < \frac{\alpha - 1}{2\alpha - 1} \le 0\ \textrm{if } \alpha > 0.5$
that are both impossible for $0 \le P_{m,j} \le 1$ (probability of miss). Hence, when $P_{m,j} \le \frac{1}{2}$, (\ref{eq:trans2}) admits two roots
\setlength{\aligntop}{-0.6em}
\setlength{\alignbot}{-1.25\baselineskip}
\begin{align}
\tilde{P}_{e,j,i} = \frac{(- B + \alpha \sqrt{(1-e^{(\frac{C+\alpha D}{\alpha^2})} )})}{A},\nonumber\\
\tilde{P}_{e,j,i}^{'}  = \frac{(- B - \alpha \sqrt{(1-e^{(\frac{C+\alpha D}{\alpha^2})} )})}{A}.
\end{align}
\setlength{\aligntop}{-0.1em}
\setlength{\alignbot}{-0.7\baselineskip}
Or $\tilde{P}_{e,j,i}^{'} < 0$ ($A>0$ due to (i)). Consequently, the solution of (\ref{eq:fina2}) is $P_{e,j,i} \le \tilde{P}_{e,j,i}$ where $\tilde{P}_{e,j,i}$ is given by (after algebraic manipulation and substituting $A$,$B$,$C$, and $D$ for their values)
\setlength{\aligntop}{-0.5em}
\setlength{\alignbot}{-1.25\baselineskip}
\begin{align}\label{eq:finish}
  \tilde{P}_{e,j,i} = \frac{ P_f(P_f -2) + \alpha \sqrt{(1-(1-\frac{P_f^2}{\alpha^2})e^{\frac{P_{m,i}(1-\alpha)(P_{m,j}-1) }{\alpha^2}} )}}{(2P_f -1)(P_f - 1)}.
\end{align}
\setlength{\aligntop}{-0.1em}
\setlength{\alignbot}{-0.7\baselineskip}
The solution $P_{e,j,i} \le \tilde{P}_{e,j,i}$ with $\tilde{P}_{e,j,i}$ in (\ref{eq:finish}) provides a sufficient analytical approximation for the solution (range of probabilities) of the initial transcendental inequality in (\ref{eq:trans}) when $P_{m,j} \le \frac{1}{2}$.

\textbf{Case 2}: $D < 0 $, that is $P_{m,j} > \frac{1}{2}$. In this case, we can easily see that $\bar{P}_{e,1,2} = 0$, and it is easily seen that the discriminant $\Delta$ is positive. By following a reasoning analogous to Case 1, we find that the sufficient analytical approximation is $P_{e,j,i} \le \tilde{P}_{e,j,i}$ with $ \tilde{P}_{e,j,i}$ given by
\setlength{\aligntop}{-0.7em}
\setlength{\alignbot}{-1.15\baselineskip}
\begin{align}
\tilde{P}_{e,j,i} = \frac{ P_f(P_f -2) + \alpha \sqrt{(1-(1-\frac{P_f^2}{\alpha^2})e^{\frac{P_{m,i}(P_{m,j}-1) }{\alpha^2}} )}}{(2P_f -1)(P_f - 1)}.
\end{align}
\setlength{\aligntop}{-0.1em}
\setlength{\alignbot}{-0.7\baselineskip}
We cam combine these two cases into one solution, that is $P_{e,j,i} \le \tilde{P}_{e,j,i}$ with
\setlength{\aligntop}{-0.4em}
\setlength{\alignbot}{-1.15\baselineskip}
\begin{align}\label{eq:finishall}
\tilde{P}_{e,j,i}\! =\! \frac{ P_f(P_f -2)\! +\! \alpha \sqrt{(1-(1-\frac{P_f^2}{\alpha^2})e^{\frac{P_{m,i}(1-\eta)(P_{m,j}-1) }{\alpha^2}} )}}{(2P_f -1)(P_f - 1)},
\end{align}
\setlength{\aligntop}{-0.1em}
\setlength{\alignbot}{-0.7\baselineskip}
with $ \eta = \alpha$, if $P_{m,j} \le \frac{1}{2}$ and $\eta= 0$ otherwise.

\nocite{WS02}

\def\baselinestretch{0.88}
\bibliographystyle{IEEEtran}
\bibliography{references}

\end{document}